\providecommand{\tabularnewline}{\\}
\theoremstyle{plain}
\newtheorem{thm}{\protect\theoremname}
\theoremstyle{definition}
\newtheorem{defn}[thm]{\protect\definitionname}
\theoremstyle{definition}
\newtheorem{example}[thm]{\protect\examplename}
\theoremstyle{plain}
\newtheorem{lem}[thm]{\protect\lemmaname}
\theoremstyle{plain}
\newtheorem{cor}[thm]{\protect\corollaryname}
\providecommand{\corollaryname}{Corollary}
\providecommand{\definitionname}{Definition}
\providecommand{\examplename}{Example}
\providecommand{\lemmaname}{Lemma}
\providecommand{\theoremname}{Theorem}
\providecommand{\corollaryname}{Corollary}
\providecommand{\definitionname}{Definition}
\providecommand{\examplename}{Example}
\providecommand{\lemmaname}{Lemma}
\providecommand{\theoremname}{Theorem}
\begin{document}
\title{{\Large{}A General Framework for Impermanent Loss in Automated Market
Makers}}
\author{Neelesh Tiruviluamala\,\,\,\,\,Alexander Port\,\,\,\,\,Erik
Lewis\thanks{Emails: neel@thrackle.io, alex@thrackle.io, erik@thrackle.io}}
\maketitle
\begin{abstract}
We provide a framework for analyzing impermanent loss for general
Automated Market Makers (AMMs) and show that Geometric Mean Market
Makers (G3Ms) are in a rigorous sense the simplest class of AMMs from
an impermanent loss viewpoint. In this context, it becomes clear why
automated market makers like Curve (\cite{Ego19}) require more parameters
in order to specify impermanent loss. We suggest the proper parameter
space on which impermanent loss should be considered and prove results
that help in understanding the impermanent loss characteristics of
different AMMs. 
\end{abstract}

\section{Introduction}

Impermanent loss for protocols like Uniswap (\cite{Ada18}) and Balancer
(\cite{MM19}) have been well studied (\cite{Eva20}, \cite{AD21},
\cite{AEC20}, \cite{Aoy20}, \cite{Bou21}). These geometric mean
market maker protocols (G3Ms) can give the false impression that certain
non-trivial properties regarding impermanent loss should hold for
all Automated Market Makers (AMMs). For instance, Figure \ref{Fig 1}
implicitly leverages the idea that impermanent loss is a one free
parameter function in the case of two dimensional G3Ms. Not by design
by true nevertheless, Curve's StableSwap AMM protocol (\cite{Ego19})
does not allow impermanent loss to be analyzed with one free parameter
(so Curve could not be compared in Figure \ref{Fig 1}). At a high
level, the goal of this paper is to provide a useful way to analyze
impermanent loss as protocols continue to increase in complexity.

\begin{figure}[t]
\noindent \begin{centering}
\includegraphics[scale=0.65]{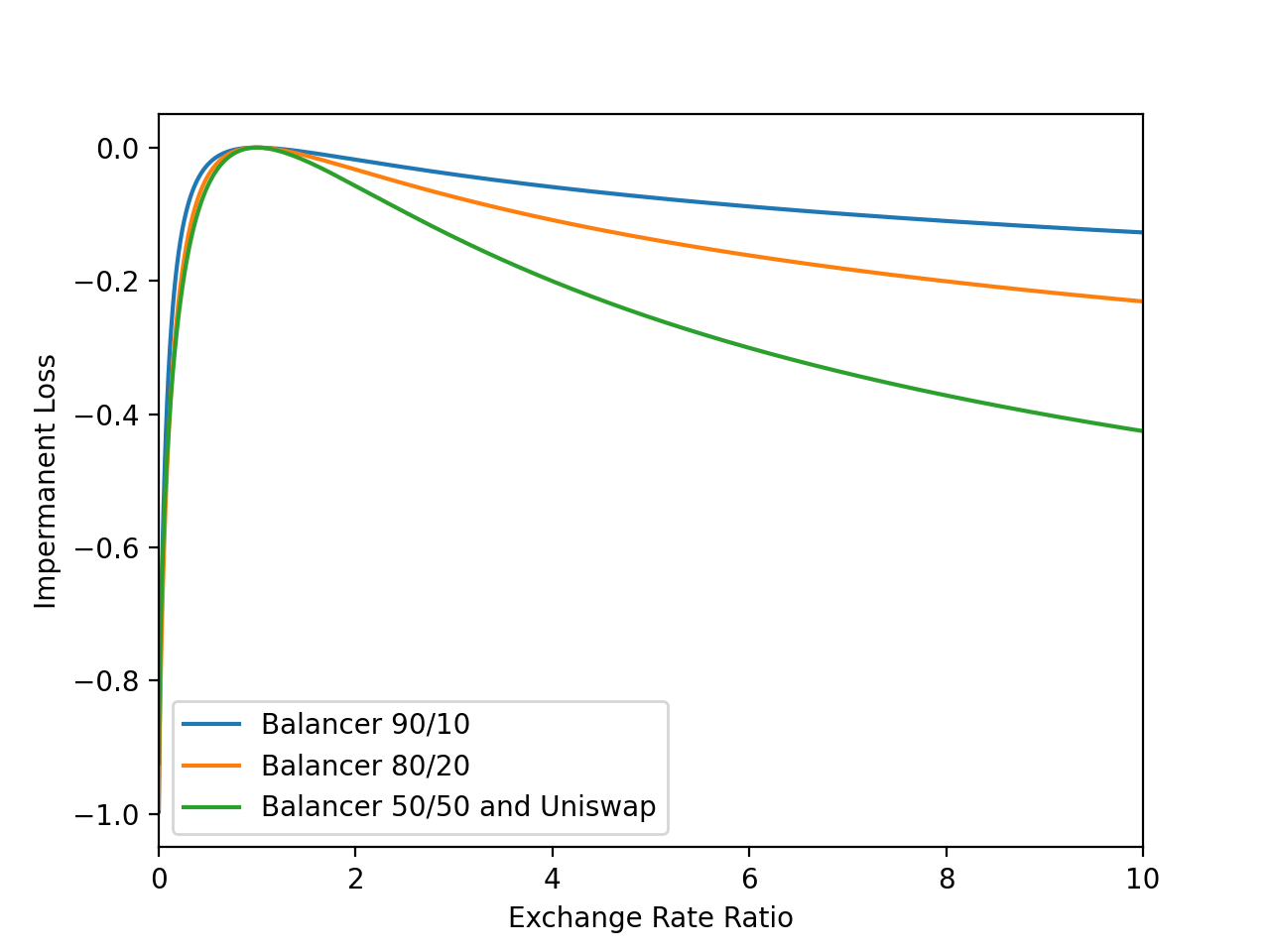} 
\par\end{centering}
\caption{\label{Fig 1}Impermanent Loss for Uniswap and Balancer Pools Compared}
\end{figure}

Along these lines, we start by observing and codifying some of the
properties of G3Ms. The major property that we focus on in this paper
is what we call \emph{Exchange Rate Level Independence} (ERLI). This
property states that impermanent loss can be understood entirely in
terms of ratios of initial and final exchange rates, as opposed to
the actual rates themselves. In other words, moving from an Eth $:$
BTC exchange rate of $15:1$ to an exchange rate of $30:1$ has the
same effect on impermanent loss as moving from an exchange rate of
$10:1$ to an exchange rate of $20:1$, since in both scenarios the
exchange rate has doubled. G3Ms exhibit this property, though to our
knowledge, it has not been clearly formalized for higher dimensional
AMMs until now. Among other conceptual benefits, exchange rate level
independence allows us to analyze impermanent loss for AMMs with fewer
parameters.

Showing that G3Ms (up to a simple transformation) have the ERLI property
and are the only AMMs to have this property is the main result of
this paper. The connection between portfolio value functions and trading
functions pointed out in \cite{AEC21} is a crucial first step towards
establishing this result. However, a slightly different viewpoint
is used in this paper. We focus on individual \emph{level surfaces}
of an AMM as opposed to the function $A$ that generates them. The
following example might motivate this shift in focus. Consider 
\begin{align*}
A(x,y) & =xy\\
B(x,y) & =e^{xy}
\end{align*}

\noindent Both of these functions have the same level curves, namely
the liquidity curves of Uniswap, but only the first function is a
geometric mean market maker. As such, trying to infer properties of
$A$ from its impermanent loss characteristics is misguided since
both the above market makers would lead to the same impermanent loss.
On the other hand, restricting a priori to the class of geometric
mean market makers precludes some interesting AMMs like Curve's StableSwap.
Focusing on the level surfaces of $A$ has conceptual benefits as
well, some of which are outlined in Section \ref{TheLegend}.

Readers familiar with the space can skip to Section $3$, in which
we derive an impermanent loss formula for higher dimensional constant
product market makers. In the process of doing so, we note that these
constant product market makers have some helpful properties that we
work to concretize in the later sections. We then show that the smallest
class of market makers with these properties is the class of market
makers whose level surfaces match those of a geometric mean market
maker.

\section{Background and Notation}

In what follows, we will use a definition for an automated market
maker (AMM) motivated by \cite{EH21}. 
\begin{defn}
\emph{An automated market maker is a map $A=A(x_{1},...,x_{n}):\mathbb{R}_{>0}^{n}\rightarrow\mathbb{R}$
where} 
\end{defn}

\begin{enumerate}
\item \emph{$A\in C^{2}(\mathbb{R}_{>0}^{n},\mathbb{R})$} 
\item \emph{$Im(A_{x_{i}})\subset\mathbb{R}_{>0}$ for all $i\in\{1,...,n\}$} 
\item \emph{$upper(A,k)=\{(x_{1},...,x_{n})\in\mathbb{R}_{>0}^{n}:A(x_{1},...,x_{n})\geq k\}$
is strictly convex for all $k\geq0$.} 
\end{enumerate}
\noindent Note that some authors require that the AMM is homogeneous
(i.e. $A(cx_{1},...,cx_{n})=c^{d}A(x_{1},...,x_{n})$ for some $d>0$)
while other authors don't; see \cite{CJ21} and \cite{EH21} respectively.
In practical terms, an AMM provides a programmatic way for traders
to swap tokens for one another. In the above definition, $n$ represents
the number of tokens, and $x_{i}$ is the quantity of token $i$.
For a given state of quantities $x=(x_{1},...,x_{n})\in\mathbb{R}_{>0}^{n}$,
there is a level surface (or curve if $n=2$) $S$ of $A$ that passes
through this state. Traders are allowed to swap tokens in and out
of the AMM in any manner that leaves the resulting state on the surface
$S$. In contrast, liquidity providers through their actions can move
the state from one level surface to another. The first condition ensures
smoothness of these surfaces, the second condition ensures that the
level surfaces are sensibly indexed, and the third condition ensures
that the level surfaces are convex. It is worth mentioning that the
commonly presented two-dimensional hyperbolic constant product market
makers correspond to the function $A(x_{1},x_{2})=x_{1}x_{2}$, some
of whose level curves are pictured below. For a more thorough introduction
to these ideas, consult \cite{AC20}.

\begin{figure}[t]
\noindent \begin{centering}
\includegraphics[scale=0.75]{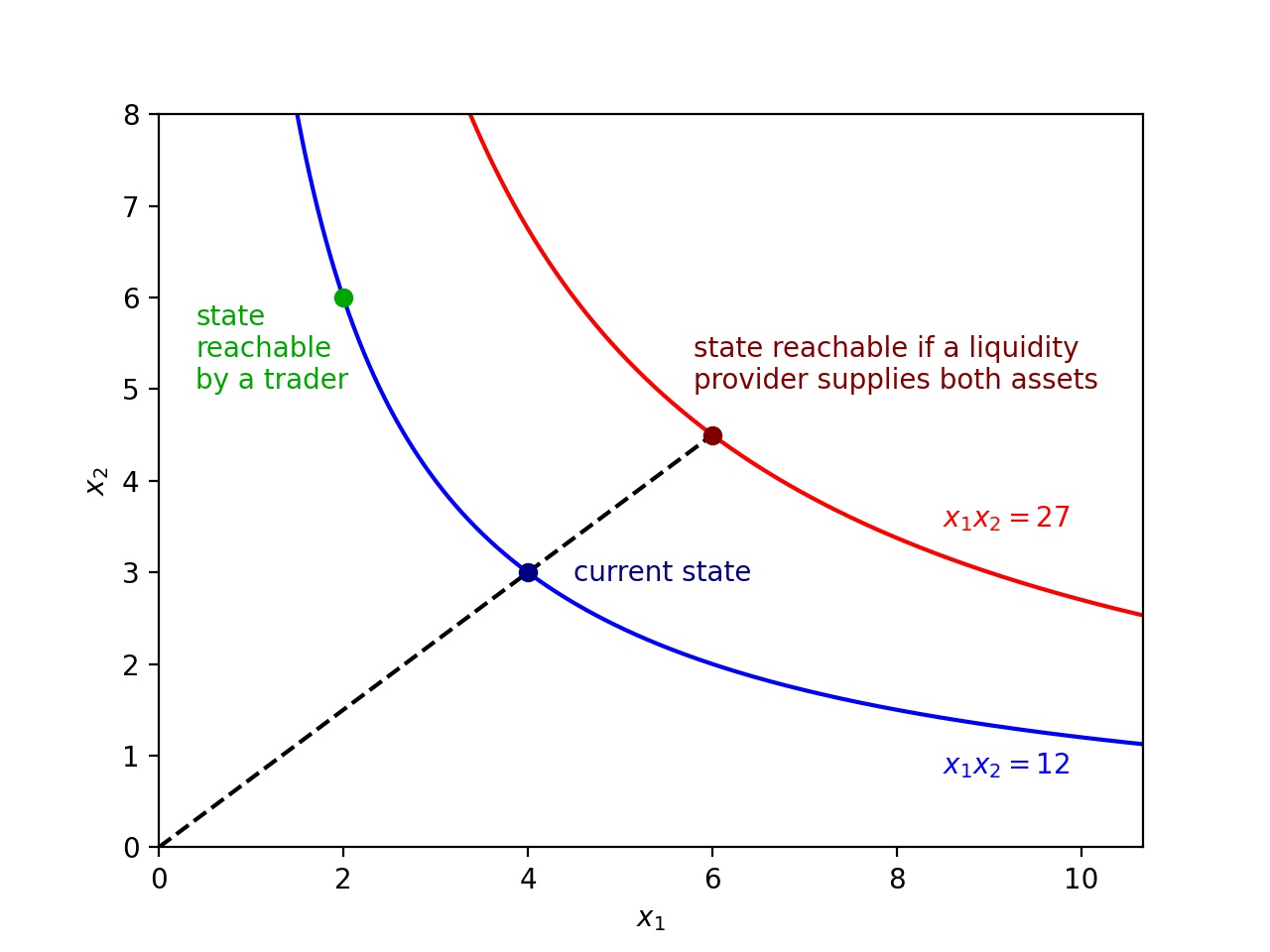} 
\par\end{centering}
\caption{\label{Fig 2}Two level curves of the AMM $A(x_{1},x_{2})=x_{1}x_{2}$
are pictured in state space. The current state $(4,3)$ consists of
4 of the first token type and $4$ of the second token type. A trader
could, for example, extract $2$ tokens of the first type in exchange
for supplying $3$ tokens of the second type. This would push the
state to $(4,3)+(-2,3)=(2,6)$, which satisfies the requirement that
it falls on the level curve $A=12$. Alternatively, a liquidity provider
could choose to add $2$ tokens of the first type and $1.5$ tokens
of the second type, expanding the pool's liquidity, but leaving its
reserves in the same ratio.}
\end{figure}

A price vector $p=(p_{1},...,p_{n})$ assigns a price to each token
so that $p_{i}$ is the price of token $i$ in terms of some measuring
currency. A price vector can be thought of in state space as the gradient
vector (it points in the direction of steepest increase) of the value
function 
\[
V(x)=p\cdot x=\sum_{i=1}^{n}p_{i}x_{i}
\]
This value function is naturally defined in that it is the sum of
the product of each token quantity by the corresponding token price.
If we think of $V$ as a function on the state space (given a fixed
price vector $p$), we see that $V$ has a constant gradient $p$.
We see in Figure \ref{Fig 3} the relationship between the price vector
$p$ and the level curves of the value function $V$ that it induces.

\begin{figure}[t]
\noindent \begin{centering}
\includegraphics[scale=0.75]{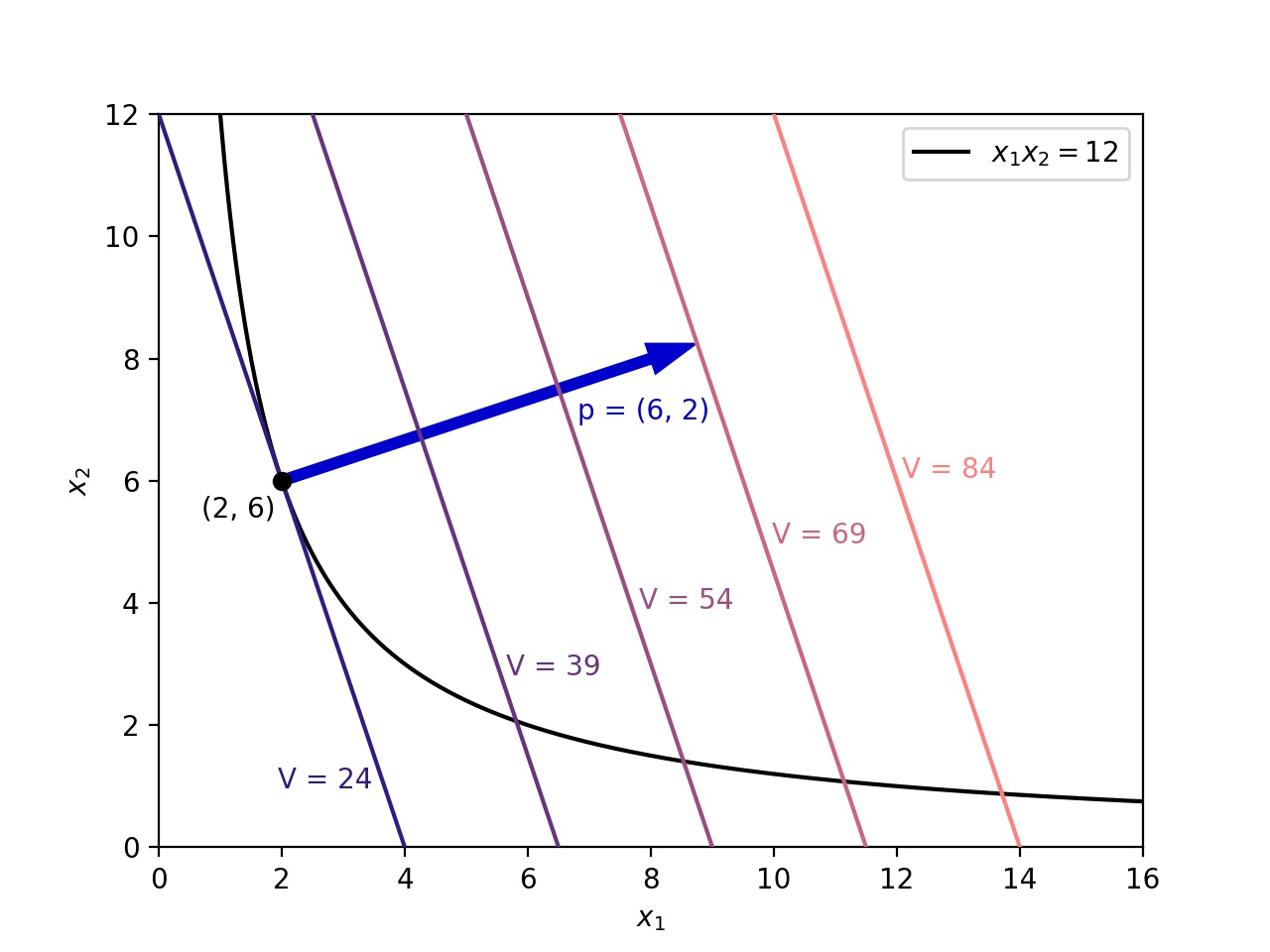} 
\par\end{centering}
\caption{\label{Fig 3}Several level curves of the value function $V$ corresponding
to the price vector $p=(6,2)$ are pictured in state space. Note that
the $V=24$ level curve is tangent to the AMM level curve $A=x_{1}x_{2}=12$
at the state $(2,6)$. It is geometrically clear, then, that the AMM
(for the $A=12$ level curve) obtains its lowest value at the state
$(2,6)$. As such, if the external market prices for the tokens were
given by $p$, arbitrageurs would drive the state to $(2,6)$.}
\end{figure}

In Figure \ref{Fig 3}, the state $(2,6)$ is referred to as a \emph{stable
point}, in that it is the state with the lowest value $V$ on the
current trading curve $A=12$. We will define stable points more formally
below. Note that the measuring currency is not important in the context
of defining stable points. Changing the measuring currency through
which $p$ is defined would stretch or contract $p$, but it would
not change its direction. Also, changing the measuring currency would
leave the family of level curves (surfaces) intact, though it would
alter the $V$ values corresponding to these level curves (surfaces).
If, for instance, the measuring currency were taken to be the first
token, $p_{1}$ would be $1$ and $p_{2}$ would be $\frac{2}{6}$,
which would result in the same level curves for the value function
as shown in Figure \ref{Fig 3}, but with the corresponding values
divided by $6$.

Let's make a fixed choice of price $p\in\mathbb{R}_{>0}^{n}$, AMM
$A:\mathbb{R}_{>0}^{n}\rightarrow\mathbb{R}$, and number $k\in\mathbb{R}$.
We say that a point $(x_{1}^{o},...,x_{n}^{o})\in\{(x_{1},...,x_{n})\in\mathbb{R}_{>0}^{n}:A(x_{1},...,x_{n})=k\}$
is \emph{stable} if

\[
\sum_{i=1}^{n}p_{i}x_{i}^{o}=inf\{\sum_{i=1}^{n}p_{i}x_{i}:(x_{1},...,x_{n})\in\mathbb{R}_{>0}^{n},A(x_{1},...,x_{n})=k\}
\]

\noindent In other words, a state $x^{o}$ is stable if it has the
lowest value $V(x^{o})$ among all states on the level surface that
it defines. This corresponds to the idea that arbitrageurs will drive
the AMM state to this stable state based on the external market valuation
of the assets. Finding such a stable point is a Lagrange multipliers
problem; we are trying to minimize the value of $V(x)$ subject to
the condition that $A(x)=k$. We know that $\nabla V=p$ and therefore
$p$ must be parallel to $\nabla A=(A_{x_{1}},...,A_{x_{n}})$ evaluated
at the given state. For instance, in Figure \ref{Fig 3} one can observe
that the gradient of $A$ at the stable point $(2,6)$ is parallel
to $p$.

In what follows, we assume that if the price vector changes (one can
think of it as fluctuating exogenously), the state vector in the AMM
changes to the corresponding stable point (thanks to arbitrageurs).
In \cite{EH21}, it is shown that for a given price vector $p$, there
is a unique stable point $x$ on the $A=k$ surface. In other words,
the stable state is a function of price

\noindent 
\[
x=x(p)
\]
but we will avoid using this notation to keep the exposition clean.
Suppose that the price vector changes from $p^{i}$ to $p^{f}$. The
following definition helps to quantify the loss liquidity providers
face when compared to those who simply hold their assets. It is commonly
referred to in the literature as \emph{impermanent loss} or \emph{divergence
loss}. 
\begin{defn}
\emph{Fix an AMM $A:\mathbb{R}_{>0}^{n}\rightarrow\mathbb{R}$ and
a level surface $A=k$. Suppose the initial and final prices are given
by $p^{i}=(p_{1}^{i},...,p_{n}^{i})$ and $p^{f}=(p_{1}^{f},...,p_{n}^{f})$,
and that the corresponding stable states are $x^{i}=(x_{1}^{i},...,x_{n}^{i})$
and $x^{f}=(x_{1}^{f},...,x_{n}^{f})$. Then the impermanent loss
is defined to be}

\noindent \emph{ 
\begin{eqnarray*}
IL & = & \frac{\text{Final value of pool assets}-\text{Value if assets were held}}{\text{Value if assets were held}}\\
 & = & \frac{V(x^{f})-V(x^{i})}{V(x^{i})}=\frac{p^{f}\cdot x^{f}-p^{f}\cdot x^{i}}{p^{f}\cdot x^{i}}=\frac{p^{f}\cdot x^{f}}{p^{f}\cdot x^{i}}-1
\end{eqnarray*}
}

Note that the above formula for impermanent loss is dependent on $p^{i}$
since $x^{i}$ is a function of $p^{i}$. Thus, impermanent loss at
first glance seems to be a function of $p^{i}$ and $p^{f}$, which
involves $2n$ parameters. In the next section, we will investigate
whether we can in general reduce the number of parameters needed to
compute impermanent loss. To provide context for this investigation,
we will conclude this section with a simple example. We will derive
the well known formula for the impermanent loss of a two dimensional
constant product market maker. 
\end{defn}

\begin{example}
\label{Ex 3}\emph{For a constant product market maker with two assets
$x_{1}$ and $x_{2}$ where $A(x_{1},x_{2})=x_{1}x_{2}=k$, consider
two assets with initial prices $p_{1}^{i}$ and $p_{2}^{i}$, initial
quantities $x_{1}^{i}$ and $x_{2}^{i}$, final prices $p_{1}^{f}$
and $p_{2}^{f}$, and final quantities $x_{1}^{f}$ and $x_{2}^{f}$.
The impermanent loss formula given by}

\noindent \emph{ 
\[
IL=\frac{p_{1}^{f}x_{1}^{f}+p_{2}^{f}x_{2}^{f}}{p_{1}^{f}x_{1}^{i}+p_{2}^{f}x_{2}^{i}}-1\text{\thinspace\thinspace can be reduced to\thinspace\thinspace}\frac{2\sqrt{t}}{t+1}-1\text{\thinspace\thinspace where\thinspace\thinspace}t=\frac{\frac{p_{2}^{f}}{p_{1}^{f}}}{\frac{p_{2}^{i}}{p_{1}^{i}}}
\]
} 
\end{example}

\begin{proof}
\emph{Before we prove the statement, we will provide intuition for
what $t$ signifies. We first define $m$ to be the exchange rate
between tokens $1$ and $2$, so that $m$ is the price of token $2$
in terms of token $1$. Concretely, $m_{f}=\frac{p_{2}^{f}}{p_{1}^{f}}$
and $m_{i}=\frac{p_{2}^{i}}{p_{1}^{i}}$. The parameter $t$ is the
quotient of the final and initial exchange rates. The further $t$
is away from $1$, the more drastically the exchange rate between
the tokens has changed.}

\emph{To prove the claim, we can start by expressing the quantities
in terms of prices. As noted above, with the assumption that arbitrageurs
will drive the value of the pool to a minimum, we can use Lagrange
multipliers to find the stable state $x$ as a function of $p$. We
see that $\nabla A=(x_{2},x_{1})$, so solving}

\[
\begin{array}{ccccc}
x_{2}=\lambda p_{1} &  & x_{1}=\lambda p_{2} &  & x_{1}x_{2}=k\end{array}
\]

\noindent \emph{yields $x_{1}=\sqrt{k}\sqrt{\frac{p_{2}}{p_{1}}}$
and $x_{2}=\sqrt{k}\sqrt{\frac{p_{1}}{p_{2}}}$. Substituting this
back into the original equation, we get}

\[
IL=\frac{p_{1}^{f}\sqrt{k}\sqrt{\frac{p_{2}^{f}}{p_{1}^{f}}}+p_{2}^{f}\sqrt{k}\sqrt{\frac{p_{1}^{f}}{p_{2}^{f}}}}{p_{1}^{f}\sqrt{k}\sqrt{\frac{p_{2}^{i}}{p_{1}^{i}}}+p_{2}^{f}\sqrt{k}\sqrt{\frac{p_{1}^{i}}{p_{2}^{i}}}}-1=\frac{2\sqrt{p_{1}^{f}p_{2}^{f}}}{p_{1}^{f}\sqrt{\frac{p_{2}^{i}}{p_{1}^{i}}}+p_{2}^{f}\sqrt{\frac{p_{1}^{i}}{p_{2}^{i}}}}-1
\]

\noindent \emph{Dividing the numerator and denominator by $\sqrt{p_{1}^{f}p_{2}^{f}}$,
we get}

\emph{ 
\[
IL=\frac{2}{\sqrt{\frac{p_{1}^{f}}{p_{2}^{f}}\sqrt{\frac{p_{2}^{i}}{p_{1}^{i}}}+\sqrt{\frac{p_{2}^{f}}{p_{1}^{f}}}\sqrt{\frac{p_{1}^{i}}{p_{2}^{i}}}}}-1
\]
}

\noindent \emph{Using the definition that $m=\frac{p_{2}}{p_{1}}$,
we obtain}

\emph{ 
\[
IL=\frac{2}{\frac{\sqrt{m_{i}}}{\sqrt{m_{f}}}+\frac{\sqrt{m_{f}}}{\sqrt{m_{i}}}}-1
\]
}

\noindent \emph{Finally, letting $t=\frac{m_{f}}{m_{i}}$, we arrive
at the desired result.}
\end{proof}
As is well known, the above formula demonstrates that impermanent
loss is least destructive when $t=1$, that is when the exchange rate
between tokens does not change (here $IL=0$). The key takeaway that
we will try to generalize in what follows is that impermanent loss
is best understood in terms of this one parameter $t$, as opposed
to the four parameters $p_{1}^{i}$, $p_{2}^{i}$, $p_{1}^{f}$ and
$p_{2}^{f}$.

\section{Higher Dimensional Constant Product Market Makers}

To gain an intuition for the parameters that should be used to conceptualize
impermanent loss in higher dimensions, we start by proving a theorem
involving an $n$ dimensional constant product market maker. 
\begin{thm}
Consider a constant product market maker with assets $x_{1},...,x_{n}$
where

\[
A(x_{1},...,x_{n})=x_{1}...x_{n}=k
\]

\noindent Let the initial prices of the assets be given by $p_{1}^{i},...,p_{n}^{i}$,
initial quantities $x_{1}^{i},...,x_{n}^{i}$, final prices $p_{1}^{f},...,p_{n}^{f}$,
and final quantities $x_{1}^{f},...,x_{n}^{f}$. The impermanent loss
formula given by

\[
IL=\frac{\sum_{j=1}^{n}p_{j}^{f}x_{j}^{f}}{\sum_{j=1}^{n}p_{j}^{f}x_{j}^{i}}-1\text{\thinspace\thinspace reduces to\thinspace\thinspace}\frac{n\prod_{j=2}^{n}t_{j}^{\frac{1}{n}}}{1+\sum_{j=2}^{n}t_{j}}-1\text{\thinspace\thinspace where\thinspace\thinspace}t_{j}=\frac{\frac{p_{j}^{f}}{p_{1}^{f}}}{\frac{p_{j}^{i}}{p_{1}^{i}}}
\]
\end{thm}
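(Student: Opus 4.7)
The plan is to generalize directly the Lagrange multiplier calculation of Example \ref{Ex 3}. First I would find the stable state as a function of price: setting $\nabla A = \lambda p$ at a point on the surface $x_1 \cdots x_n = k$, one gets $k/x_j = \lambda p_j$ for each $j$, so $x_j = k/(\lambda p_j)$. Plugging this into the constraint $\prod_j x_j = k$ isolates $\lambda$ in terms of the geometric mean $G(p) := \bigl(\prod_j p_j\bigr)^{1/n}$, yielding the clean expression
\[
x_j(p) = \frac{k^{1/n}\, G(p)}{p_j}.
\]

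Next I would substitute $x^i = x(p^i)$ and $x^f = x(p^f)$ into the impermanent loss formula. The numerator collapses nicely because $p_j^f x_j^f = k^{1/n} G(p^f)$ for every $j$, so $\sum_j p_j^f x_j^f = n\, k^{1/n} G(p^f)$. The denominator becomes $k^{1/n} G(p^i) \sum_j (p_j^f / p_j^i)$. Thus
\[
IL + 1 = \frac{n\, G(p^f)}{G(p^i)\, \sum_{j=1}^n p_j^f / p_j^i}.
\]

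The last step, which is the only place any real manipulation happens, is to recognize that both $G(p^f)/G(p^i)$ and the sum $\sum_j p_j^f/p_j^i$ scale homogeneously under the substitution $p_j^f/p_j^i = (p_1^f/p_1^i)\, t_j$ (taking $t_1 = 1$ to make the formula uniform). The geometric mean ratio becomes $(p_1^f/p_1^i)\prod_{j=2}^n t_j^{1/n}$ and the arithmetic sum becomes $(p_1^f/p_1^i)\bigl(1 + \sum_{j=2}^n t_j\bigr)$, so the common factor $p_1^f/p_1^i$ cancels out and one is left with precisely $n \prod_{j=2}^n t_j^{1/n} / \bigl(1 + \sum_{j=2}^n t_j\bigr)$.

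The main conceptual step, which will be a running theme of the paper, is noting that absolute prices cancel out of the final expression. Algebraically, this is essentially automatic once the stable state is written in the form $x_j \propto G(p)/p_j$: both the geometric-mean numerator and the arithmetic-mean-style denominator rescale by the same factor when one changes the common reference price $p_1$. There is no serious obstacle in the calculation itself; the work lies in presenting it so that the $n-1$ parameter reduction is clearly visible, since this reduction is the motivating observation for the rest of the paper.
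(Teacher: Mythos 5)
Your proposal is correct and follows essentially the same route as the paper: solve the Lagrange system to get $x_j = k\lambda^{-1}p_j^{-1}$ (equivalently $x_j = k^{1/n}G(p)/p_j$), observe that $p_j^f x_j^f$ is constant in $j$, and then cancel the common factor $p_1^f/p_1^i$ after substituting $p_j^f/p_j^i = (p_1^f/p_1^i)\,t_j$. The only cosmetic difference is that the paper routes the last step through the exchange rates $m_j = p_j/p_1$ before forming the quotients $t_j$, whereas you substitute directly; the computations are identical.
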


\begin{proof}
\emph{We again start by expressing the quantities in terms of prices.
To do so, similarly as in Example \ref{Ex 3}, we must solve the Lagrange
multipliers equations:}

\noindent \emph{
\begin{eqnarray*}
\prod_{j\neq i}x_{j} & = & \lambda p_{i}\text{\thinspace\thinspace for each\thinspace\thinspace\ \ensuremath{i\in\{1,...,n\}}}\\
\prod_{j=1}^{n}x_{j} & = & k
\end{eqnarray*}
}

\noindent \emph{Multiplying the first $n$ equations leads to the
equation}

\emph{ 
\begin{eqnarray*}
\prod_{j=1}^{n}x_{j}^{n-1} & = & \lambda^{n}\prod_{j=1}^{n}p_{j}
\end{eqnarray*}
}

\noindent \emph{and so}

\emph{ 
\[
k^{n-1}=\lambda^{n}\prod_{j=1}^{n}p_{j}
\]
}

\noindent \emph{Thus, $\lambda=k^{\frac{n-1}{n}}\prod_{j=1}^{n}p_{j}^{-\frac{1}{n}}$.
Dividing the last of the Lagrange multiplier (the constraint equation)
by the $j^{th}$ equation yields 
\[
x_{j}=k\lambda^{-1}p_{j}^{-1}
\]
}

\noindent \emph{Since the impermanent loss formula involves $p_{j}x_{j}$
expressions, it is convenient to note that}

\noindent \emph{ 
\begin{eqnarray*}
p_{j}x_{j} & = & k\lambda^{-1}\\
 & = & k^{\frac{1}{n}}\prod_{l=1}^{n}p_{l}^{\frac{1}{n}}
\end{eqnarray*}
}

\noindent \emph{This demonstrates that the stable state for a constant
product market maker occurs when each collection of tokens in the
AMM has the same value. Substituting this back into the original equation,
we get}

\noindent \emph{ 
\begin{eqnarray*}
IL & = & \frac{\sum_{j=1}^{n}p_{j}^{f}x_{j}^{f}}{\sum_{j=1}^{n}p_{j}^{f}x_{j}^{i}}-1\\
 & = & \frac{n\left(\prod_{j=1}^{n}p_{j}^{f}\right)^{\frac{1}{n}}}{\left(\prod_{j=1}^{n}p_{j}^{i}\right)^{\frac{1}{n}}\sum_{j=1}^{n}\frac{p_{j}^{f}}{p_{j}^{i}}}-1\\
 & = & \frac{\left(\prod_{j=1}^{n}\frac{p_{j}^{f}}{p_{j}^{i}}\right)^{\frac{1}{n}}}{\frac{1}{n}\sum_{j=1}^{n}\frac{p_{j}^{f}}{p_{j}^{i}}}-1
\end{eqnarray*}
}

\noindent \emph{We now introduce the analogs to the exchange rates
in Example \ref{Ex 3} by defining $m_{j}=\frac{p_{j}}{p_{1}}$, so
that $m_{j}$ is the exchange rate between tokens $1$ and $j$. Then,
we see that}

\noindent \emph{ 
\begin{eqnarray*}
IL & = & \frac{\left(\prod_{j=1}^{n}\frac{p_{j}^{f}}{p_{j}^{i}}\frac{p_{1}^{i}}{p_{1}^{f}}\right)^{\frac{1}{n}}}{\frac{1}{n}\sum_{j=1}^{n}\frac{p_{j}^{f}}{p_{j}^{i}}\frac{p_{1}^{i}}{p_{1}^{f}}}-1\\
 & = & \frac{\left(\prod_{j=1}^{n}\frac{m_{j}^{f}}{m_{j}^{i}}\right)^{\frac{1}{n}}}{\frac{1}{n}\sum_{j=1}^{n}\frac{m_{j}^{f}}{m_{j}^{i}}}-1\\
 & = & \frac{\left(\prod_{j=1}^{n}t_{j}\right)^{\frac{1}{n}}}{\frac{1}{n}\sum_{j=1}^{n}t_{j}}-1
\end{eqnarray*}
}

\noindent \emph{Thus, noting that $t_{1}=1$ yields the final result:}

\emph{ 
\[
IL=\frac{\text{Geometric Mean of the\thinspace\thinspace}t_{j}\text{'s}}{\text{Arithmetic Mean of the\thinspace\thinspace}t_{j}\text{'s}}-1
\]
} 
\end{proof}
Similarly to Example \ref{Ex 3}, $t_{j}$ is the quotient of the
final exchange rate and initial exchange rate between assets $j$
and $1$. The fact that the $IL\leq0$ for constant product market
makers is a direct consequence of the arithmetic mean-geometric mean
inequality applied to these exchange rate quotients. Again, the key
takeaway is that impermanent loss is best understood in terms of these
$n-1$ $t_{j}$ parameters, not the $2n$ price parameters. This result
has important economic implications in addition to its overall simplifying
nature.

The parameter reduction in the impermanent loss formula from $n$
initial prices and $n$ final prices to $n-1$ quotients of exchange
rates can be decomposed into two parts. The first reduction consists
of using relative pricing of one asset against another as opposed
to the raw prices (in our case, we have been pricing assets in terms
of the first asset). This is typically referred to as \emph{numeraire}
independence in finance, but we will refer to it as \emph{price level
independence}. The next reduction takes us from separate relative
initial and final exchange rates to quotients of final and initial
exchange rates. We call this \emph{exchange rate level independence},
since the absolute levels of the initial and final exchange rates
do not matter, only the quotients. In the next section, we will show
that impermanent loss for all AMMs exhibits \emph{price level independence},
but not necessarily \emph{exchange rate level independence}.

\section{Price Level Independence}

The goal of this section is to generalize the above idea of reducing
the number of parameters involved in defining impermanent loss. We
will begin by showing that impermanent loss for all AMMs is price
level independent. Price level independence is intuitive and not particularly
deep in its own right, but we wish to build towards understanding
exchange rate level independence systematically. To this end, we need
the following lemma, which formalizes the concept introduced in the
background section that rescaling price vectors does not affect stable
points. In this section, we think of there being a fixed measuring
currency through which $p$ and hence $V(x)$ are defined. In other
words, $2p$ corresponds to a world in which all prices have doubled,
not one in which the measuring currency has changed. 
\begin{lem}
Let $A$ be an AMM and for fixed $k$, consider the liquidity surface
$A=k$. We know that we can express the stable point on the $A=k$
surface as a function of price $p$: $x=x(p)$. The function $x(p)$
is homogeneous of degree $0$ in $p$. In other words,

\[
x(cp)=x(p)
\]

\noindent for all $c>0$, and so stable points are price level independent. 
\end{lem}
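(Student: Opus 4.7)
The plan is to use the Lagrange multiplier characterization of the stable point already developed in Section~2 together with the uniqueness statement cited from \cite{EH21}. Recall that $x(p)$ is defined as the unique point on the surface $A=k$ minimizing the linear functional $V(x)=p\cdot x$, and by Lagrange multipliers it satisfies
\[
\nabla A(x(p)) \;=\; \lambda\, p, \qquad A(x(p))=k,
\]
for some scalar $\lambda$ (necessarily positive, given condition~2 of the AMM definition and $p\in\mathbb{R}_{>0}^{n}$).

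First I would observe the one-line ``optimization'' version of the argument: for any $c>0$, minimizing $x\mapsto (cp)\cdot x$ over the set $\{A=k\}$ is the same problem as minimizing $x\mapsto p\cdot x$ over that set, since positive rescaling of the objective preserves the argmin. Hence the unique minimizers agree, giving $x(cp)=x(p)$.

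To make this fully explicit through the Lagrange system (which also pins down how $\lambda$ transforms), I would then verify directly that $x(p)$ solves the Lagrange system associated to the rescaled price $cp$: if $\nabla A(x(p))=\lambda p$, then
\[
\nabla A(x(p)) \;=\; \lambda p \;=\; \tfrac{\lambda}{c}\,(cp),
\]
so $x(p)$ together with multiplier $\lambda/c$ satisfies both the gradient condition and the constraint $A(x(p))=k$ for the price vector $cp$. Strict convexity of the upper level sets of $A$ (condition~3) together with the uniqueness result from \cite{EH21} then forces $x(cp)=x(p)$.

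There is no real obstacle here; the only point requiring care is invoking uniqueness of the stable point so that matching the Lagrange conditions is enough to conclude equality of the two minimizers, rather than merely that $x(p)$ is \emph{a} critical point of the rescaled problem. Since uniqueness has already been stated in the preceding exposition, this step is immediate.
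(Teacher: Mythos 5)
Your proposal is correct and follows essentially the same route as the paper: verify that $x(p)$ satisfies the Lagrange system for the rescaled price $cp$ with multiplier $\lambda/c$, then invoke uniqueness of stable points to conclude $x(cp)=x(p)$. The extra one-line observation that rescaling the objective preserves the argmin is a nice shortcut but does not change the substance of the argument.
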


\begin{proof}
\emph{Fix a price vector $p$ and $c>0$. Let $x^{*}=x(p)$. Then,
$x^{*}$ satisfies the following Lagrange multiplier equations:}

\emph{ 
\begin{eqnarray*}
\nabla_{x}A(x^{*}) & = & \lambda p\\
A(x^{*}) & = & k
\end{eqnarray*}
}

\noindent \emph{for some $\lambda$. It follows that $x^{*}$ satisfies}

\emph{ 
\begin{eqnarray*}
\nabla_{x}A(x^{*}) & = & \tilde{\lambda}cp\\
A(x^{*}) & = & k
\end{eqnarray*}
}

\noindent \emph{for $\tilde{\lambda}=\frac{\lambda}{c}$. By uniqueness
of stable points, $x(cp)=x^{*}$.} 
\end{proof}
Price level independence for the impermanent loss of an AMM follows
immediately. 
\begin{thm}
\label{Thm 6}The impermanent loss of an AMM exhibits price level
independence. In other words, impermanent loss can be expressed entirely
in terms of exchange rates of tokens relative to the first token (or
any token for that matter). 
\end{thm}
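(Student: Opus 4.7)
The plan is to combine the defining formula for impermanent loss with the homogeneity lemma just established, and observe that two independent scalings (one of $p^i$ and one of $p^f$) leave $IL$ invariant. Since $p^i$ and $p^f$ together carry $2n$ real parameters, cancelling two scale factors should leave us with $2n-2$ independent quantities, which we will exhibit as the exchange rates $m_j^i = p_j^i/p_1^i$ and $m_j^f = p_j^f/p_1^f$ for $j = 2,\ldots,n$.

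Concretely, first I would apply the lemma with $c = 1/p_1^i$ to rewrite $x^i = x(p^i) = x(p^i/p_1^i) = x(m^i)$, where $m^i = (1, m_2^i,\ldots, m_n^i)$, and similarly $x^f = x(m^f)$. Next I would observe that in the quotient
\[
IL = \frac{p^f \cdot x^f}{p^f \cdot x^i} - 1,
\]
the common factor $p_1^f$ can be pulled out of both the numerator and denominator of the ratio, yielding
\[
IL = \frac{m^f \cdot x(m^f)}{m^f \cdot x(m^i)} - 1.
\]
This expression depends only on $m^i$ and $m^f$, i.e., only on the exchange rates relative to the first token, which is exactly the claim of price level independence.

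There is not really a hard step here once the lemma is in hand; the only thing to be careful about is the two-sided nature of the reduction. The scaling of $p^i$ only affects $x^i$ (through the stable-point map) and is absorbed by the homogeneity lemma, while the scaling of $p^f$ affects both $x^f$ (again absorbed by the lemma) and the explicit $p^f$ factors in the $IL$ formula, which cancel by linearity of the inner product. I would mention this bookkeeping explicitly so the reader sees why exactly two degrees of freedom are eliminated, and note as a parenthetical that the choice of the first token as numeraire is arbitrary: any token index would work, giving the same reduction to $2n-2$ parameters.
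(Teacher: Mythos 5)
Your proposal is correct and follows the paper's own argument essentially verbatim: cancel the common factor $p_1^f$ in the numerator and denominator of the $IL$ ratio, and use the degree-zero homogeneity lemma to replace $x(p^i)$ and $x(p^f)$ by $x(m^i)$ and $x(m^f)$. The extra bookkeeping you mention about why exactly two degrees of freedom are eliminated is a nice expository touch but not a mathematical difference.
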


\begin{proof}
\,

\noindent 
\begin{eqnarray*}
IL & = & \frac{\sum_{j=1}^{n}p_{j}^{f}x_{j}(p^{f})}{\sum_{j=1}^{n}p_{j}^{f}x_{j}(p^{i})}-1\\
 & = & \frac{\sum_{j=1}^{n}\frac{p_{j}^{f}}{p_{1}^{f}}x_{j}(p^{f})}{\sum_{j=1}^{n}\frac{p_{j}^{f}}{p_{1}^{f}}x_{j}(p^{i})}-1=\frac{\sum_{j=1}^{n}\frac{p_{j}^{f}}{p_{1}^{f}}x_{j}\left(\frac{p^{f}}{p_{1}^{f}}\right)}{\sum_{j=1}^{n}\frac{p_{j}^{f}}{p_{1}^{f}}x_{j}\left(\frac{p^{i}}{p_{1}^{i}}\right)}-1
\end{eqnarray*}

\noindent \emph{where the last equality follows from the lemma. Using
our previous notation and defining $m_{j}=\frac{p_{j}}{p_{1}}$, we
obtain}

\emph{ 
\[
IL=\frac{\sum_{j=1}^{n}m_{j}^{f}x_{j}(m^{f})}{\sum_{j=1}^{n}m_{j}^{f}x_{j}(m^{i})}-1
\]
} 
\end{proof}
Indeed, it is clear from the proof that we could express impermanent
loss using exchange rates relative to any fixed asset. Along these
lines, it will be convenient at times to change the base token from
the first token to another of the $n$ tokens. In other words, we
might define 
\[
m_{j}=m_{j,k}=\frac{p_{j}}{p_{k}}
\]
so that we use the $k^{\text{th}}$ token as the base token when computing
exchange rates. In this case, $m_{k}=1$. To keep the notation uncluttered,
though, we will suppress the $k$ and make sure that it is clear from
context when important. 
\begin{example}
\emph{\label{Ex 7}We will give context for the price level independence
of impermanent loss through an example involving a two dimensional
AMM $A(x_{1},x_{2})$. We make no assumptions about the specific formula
for this AMM and we investigate a fixed liquidity curve $A(x_{1},x_{2})=k$.
The price vectors $p^{i}$ and $p^{f}$ are perpendicular to the liquidity
curve at the initial and final stable states $x^{i}$ and $x^{f}$
respectively. The above theorem formalizes the idea that impermanent
loss can be understood using slopes alone. In other words, the sizes
of the price vectors $p^{i}$ and $p^{f}$ are irrelevant. If we use
the second token as the base token, we see that the normalized versions
of $p^{i}$ and $p^{f}$ are $\left(\frac{p_{1}^{i}}{p_{2}^{i}},1\right)$
and $\left(\frac{p_{1}^{f}}{p_{2}^{f}},1\right)$ respectively. Thus,
in two dimensions, each price vector is associated to a slope as is
pictured, and these slopes are the negatives of the exchange rates.
Instead of the four parameters $p_{1}^{i}$, $p_{2}^{i}$, $p_{1}^{f}$,
and $p_{2}^{f}$, we only need two parameters: $m_{1}^{i}$ and $m_{1}^{f}$.
Refer to Figure \ref{Fig 4} for visualization.} 
\end{example}

\begin{figure}[t]
\noindent \begin{centering}
\includegraphics[scale=0.75]{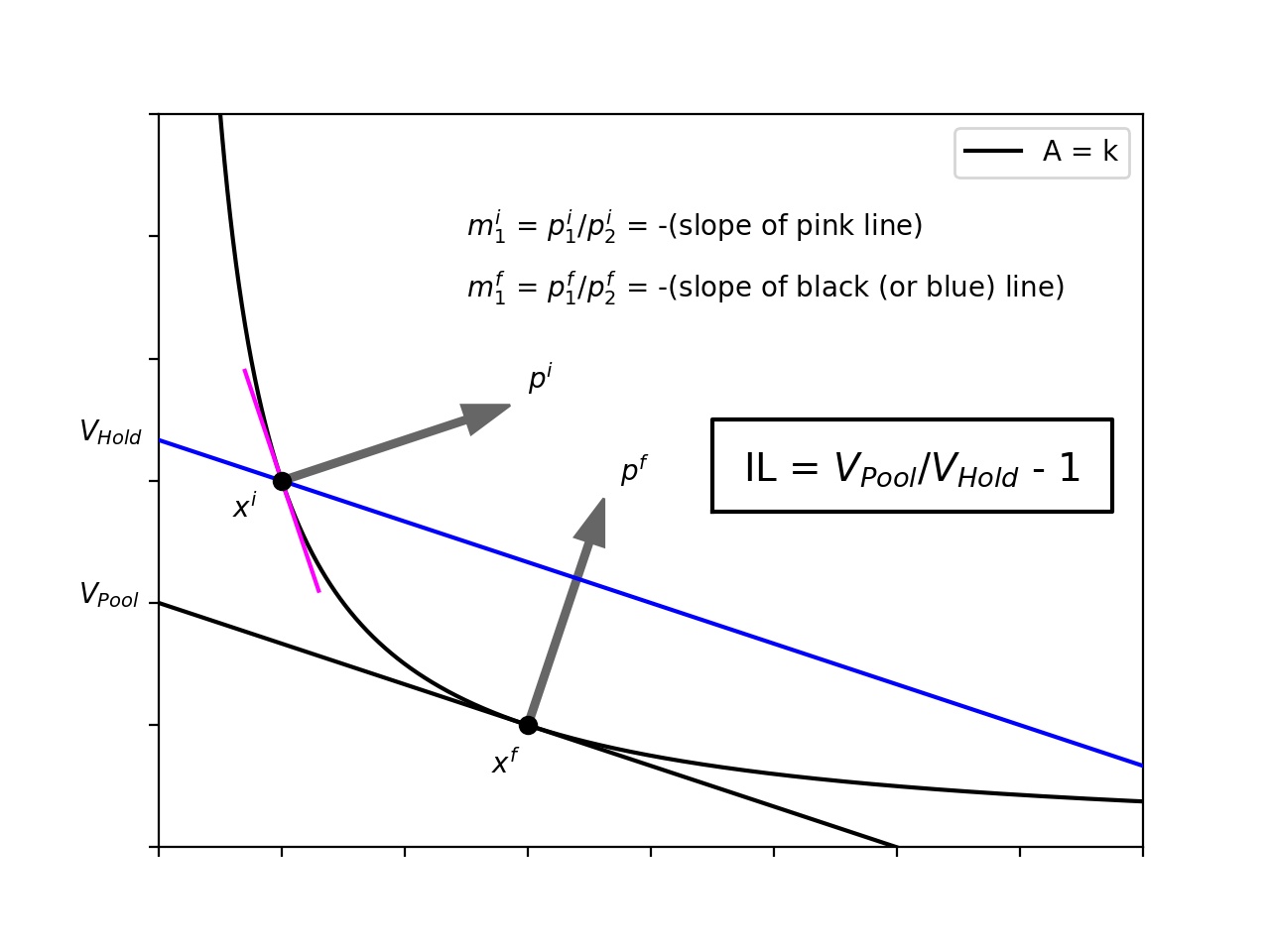} 
\par\end{centering}
\caption{\label{Fig 4}The tangent lines to the liquidity curve at $x^{i}$
and $x^{f}$ are the negatives of these exchange rates. The blue and
black lines are level curves of the value function $V(x)=p^{f}\cdot x$.
It is convenient to use the $x_{2}$ intercepts of these level curves
to quantify value using the $x_{2}$ currency.}
\end{figure}

Another way of interpreting Theorem \ref{Thm 6} is that impermanent
loss is agnostic to the measuring currency used for both $p^{i}$
and $p^{f}$. Indeed, we could use a different measuring currency
for $p^{i}$ and $p^{f}$ without affecting impermanent loss, though
we will not do this. The above figure offers some insight into how
we can choose the measuring currency so that $V(x)=p\cdot x$ can
be gleaned geometrically. More concretely, if the $x_{2}$ token is
the measuring currency, the pool value is the $x_{2}$ intercept\footnote{We could use any of the tokens as the measuring currency, and then
the pool value would be the corresponding intercept value.} of the line $L$ with the following properties: 
\begin{itemize}
\item $L$ passes through the current pool state 
\item $L$ is perpendicular to $p$ 
\end{itemize}
The slope of $L$ is the negative of the exchange rate between the
two tokens. This line $L$ will be tangent to the $A=k$ curve if
the pool state is the current stable state. In the above example,
if the individual had just held their assets, the value of their assets
would have been equivalent to the value of $V_{\text{Hold}}$ $x_{2}$
tokens. By providing liquidity, the value of their assets is instead
$V_{\text{Pool}}$ $x_{2}$ tokens. This diagram makes it clear why
the convexity condition for AMMs ensures that impermanent loss will
always be negative.

\section{The Legendre Transform and the Connection Between Value and State}

\label{TheLegend}

As introduced in \cite{AEC21}, there is an elegant mathematical relationship
between automated market makers and their corresponding portfolio
value functions. In that paper, the relationship is formulated between
the AMM $A(x)$ and its value function $V(x)$ via the Legendre-Fenchel
transform. Example \ref{Ex 7} motivates us to proceed a bit differently.
The core object that we will work with instead of $A(x)$ is $f(x_{1},x_{2},\ldots x_{n-1})$,
which we will define as the function that defines the specific level
curve (or surface) $A(x)=k$: 
\[
A(x_{1},\ldots,x_{n-1},f(x_{1},\ldots,x_{n-1}))=k
\]

\noindent For those looking for a visual encapsulation of this section,
and possibly to skip some of the rigmarole, the graphs at the end
of this section illustrate the main ideas.

By our AMM assumptions and the implicit function theorem, $x_{n}=f(x_{1},x_{2},\ldots,x_{n-1})$
is a continuously differentiable surface. Henceforth, we will use
the $x_{n}$ token as the measuring currency through which $p$ and
hence $V(x)$ are defined. In other words, $p_{n}=1$ and $p_{i}$
is the exchange rate between token $i$ and token $n$ for $1\leq i\leq n-1$.
To maintain consistency with the previous section, we will use $m_{i}:=p_{i}$
to reference these exchange rates. Define $\bar{V}$ as

\noindent 
\begin{eqnarray*}
V(x_{1},\ldots,x_{n-1}) & := & V(x_{1},\ldots,x_{n-1},f(x_{1},\ldots,x_{n-1}))\\
 & = & f(x_{1},\ldots,x_{n-1})+\sum_{j=1}^{n-1}m_{j}x_{j}
\end{eqnarray*}

\noindent and define

\noindent 
\begin{eqnarray*}
W(m_{1},\ldots,m_{n-1}) & := & V(x_{1}(m_{1},\ldots,m_{n-1},1),\ldots,x_{n-1}(m_{1},\ldots,m_{n-1},1))
\end{eqnarray*}

In other words, given the exchange rates $m_{1},m_{2},\ldots,m_{n-1}$,
a unique stable point $x$ is determined, and $W$ is the value of
the corresponding stable portfolio. We will use $x_{i}(m_{1},\ldots,m_{n-1},1)$
interchangeably with $x_{i}(m_{1},\ldots,m_{n-1})$, mainly to simplify
notation and to avoid introducing new notation. From Example \ref{Ex 7},
we see that $-m_{1}=f'(x_{1}(m_{1}))$. In other words, at stable
points, we can obtain the exchange rate as the negative derivative
of $f$ \footnote{In Example \ref{Ex 7}, $x^{f}$ is the stable point at the final
exchange rate $m_{1}^{f}$, while $x^{i}$ is not a stable point.}. The following lemma proves and generalizes this observation to cases
in which $x$ is a function of $n-1$ exchange rates. 
\begin{lem}
\label{Lem 8}With $f(x_{1},x_{2},\ldots,x_{n-1})$ and $m_{i}$ for
$1\leq i\leq n-1$ defined as above, we have that 
\[
-m_{i}=f_{x_{i}}(x_{1}(m_{1},\ldots,m_{n-1}),\ldots,x_{n-1}(m_{1},\ldots,m_{n-1}))
\]
\end{lem}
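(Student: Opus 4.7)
The plan is to combine the implicit function theorem (which defines $f$) with the Lagrange multiplier characterization of stable points (which defines $x(m)$), and match the two expressions for the partial derivatives of $A$.

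First, I would differentiate the defining relation $A(x_{1},\ldots,x_{n-1},f(x_{1},\ldots,x_{n-1}))=k$ with respect to $x_{i}$ using the chain rule. This gives
\[
A_{x_{i}}+A_{x_{n}}\,f_{x_{i}}=0,
\]
and since $A_{x_n} > 0$ by the second AMM axiom, we can solve to obtain
\[
f_{x_{i}}(x_{1},\ldots,x_{n-1}) = -\frac{A_{x_{i}}(x_{1},\ldots,x_{n-1},f(x_{1},\ldots,x_{n-1}))}{A_{x_{n}}(x_{1},\ldots,x_{n-1},f(x_{1},\ldots,x_{n-1}))}.
\]
This step is purely geometric: $-f_{x_i}$ is the slope of the level surface in the $x_i$--$x_n$ direction, which is the ratio of the components of the gradient of $A$.

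Next, I would invoke the Lagrange multiplier condition characterizing the stable point $x(m_{1},\ldots,m_{n-1})$ associated to the price vector $p = (m_{1},\ldots,m_{n-1},1)$ (recall $p_n = 1$ because the $x_n$ token is the numeraire). By the Lagrange condition $\nabla A = \lambda p$, evaluated at the stable point we have $A_{x_{i}} = \lambda m_{i}$ for $1 \le i \le n-1$ and $A_{x_{n}} = \lambda$. Substituting these into the implicit-function expression above, the $\lambda$'s cancel and I get
\[
f_{x_{i}}\bigl(x_{1}(m),\ldots,x_{n-1}(m)\bigr) = -\frac{\lambda m_{i}}{\lambda} = -m_{i},
\]
which is exactly the claim.

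The only subtlety is making sure that evaluating $f_{x_i}$ at $(x_{1}(m),\ldots,x_{n-1}(m))$ corresponds to evaluating $A_{x_i}/A_{x_n}$ at the same stable state in the full $n$-dimensional space; this is automatic from the definition of $f$, since $f(x_{1}(m),\ldots,x_{n-1}(m)) = x_{n}(m)$ by construction. Apart from this bookkeeping, no real obstacle arises — the proof is essentially two one-line computations glued together, and the content of the lemma is that Lagrange multipliers force the tangent hyperplane to the level surface at the stable point to have slopes $-m_i$, which geometrically just restates that $\nabla A$ is parallel to $p$.
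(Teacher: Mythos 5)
Your proof is correct and follows essentially the same route as the paper's: differentiate the implicit relation defining $f$ to get $f_{x_i}=-A_{x_i}/A_{x_n}$, then substitute the Lagrange condition $\nabla A=\lambda m$ at the stable point so the $\lambda$'s cancel. Your added remark that $A_{x_n}>0$ justifies the division is a small but welcome bit of extra care.
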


\begin{proof}
\emph{For the stable point $x(m)$, we have the Lagrange multiplier
equations}

\noindent \emph{ 
\begin{eqnarray*}
\nabla_{x}A(x(m)) & = & \lambda m\\
A(x(m)) & = & k
\end{eqnarray*}
}

\noindent \emph{for some $\lambda$. Differentiating the relation
\[
A(x_{1},\ldots,x_{n-1},f(x_{1},\ldots,x_{n-1}))=k
\]
with respect to $x_{i}$ yields 
\[
A_{x_{i}}(x_{1},\ldots,x_{n-1},f(x_{1},\ldots,x_{n-1}))+A_{x_{n}}(x_{1},\ldots,x_{n-1},f(x_{1},\ldots,x_{n-1}))f_{x_{i}}(x_{1},\ldots,x_{n-1})=0
\]
\[
\implies f_{x_{i}}(x_{1},\ldots,x_{n-1})=-\frac{A_{x_{i}}(x_{1},\ldots,x_{n-1},f(x_{1},\ldots,x_{n-1}))}{A_{x_{n}}(x_{1},\ldots,x_{n-1},f(x_{1},\ldots,x_{n-1}))}
\]
Substituting $x_{i}(m_{1},\ldots,m_{n-1})$ into the above yields}

\emph{ 
\[
f_{x_{i}}(x_{1}(m_{1},\ldots,m_{n-1}),\ldots,x_{n-1}(m_{1},\ldots,m_{n-1}))=-\frac{\lambda m_{i}}{\lambda}=-m_{i}
\]
}

\noindent \emph{and so the result is proved.} 
\end{proof}
\noindent The next lemma frames the same relationship from the viewpoint
that the $m_{i}$ are functions of the $x_{i}$. 
\begin{lem}
\label{Lem 9}With $f(x_{1},x_{2},\ldots,x_{n-1})$ defined as above
and the $m_{i}(x_{1},\ldots,x_{n-1})$ functions for $1\leq i\leq n-1$
defined as the natural inverses to the $x_{i}(m_{1},\ldots,m_{n-1})$
functions, we have that 
\[
m_{i}(x_{1},\ldots,x_{n-1})=-f_{x_{i}}(x_{1},\ldots,x_{n-1})
\]
\end{lem}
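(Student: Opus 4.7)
The plan is to obtain Lemma 9 essentially as a reformulation of Lemma 8, with the one piece of genuine content being the verification that the functions $m_i(x_1,\ldots,x_{n-1})$ are well-defined as true inverses of the $x_i(m_1,\ldots,m_{n-1})$. Lemma 8 already gives the identity $f_{x_i}(x(m)) = -m_i$ when the argument of $f_{x_i}$ is the stable point parametrized by an exchange-rate vector $m$; Lemma 9 is the same identity written with $x$ as the independent variable.

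First I would justify that the map $m \mapsto x(m)$ from the space of normalized exchange rates (with $p_n = 1$) to points on the level surface $A = k$ is a bijection, so that the inverse functions $m_i(x_1,\ldots,x_{n-1})$ are unambiguous. Uniqueness of a stable point given $m$ is quoted from \cite{EH21}. For surjectivity, given any point $\tilde x$ on the level surface $A = k$, condition 2 of the AMM definition guarantees that $\nabla_x A(\tilde x)$ has strictly positive components, so normalizing so that the last coordinate equals $1$ produces a unique exchange-rate vector $\tilde m$ for which the Lagrange conditions are satisfied at $\tilde x$; strict convexity of $\text{upper}(A,k)$ then forces $\tilde x$ to be the unique stable point for $\tilde m$. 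This gives a well-defined $m_i(x_1,\ldots,x_{n-1})$ satisfying $x_j(m_1(\tilde x),\ldots,m_{n-1}(\tilde x)) = \tilde x_j$ for every $j$.

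With the bijection in hand, the proof reduces to direct substitution. Fix a point $(x_1,\ldots,x_{n-1})$ on the level surface and set $m_i := m_i(x_1,\ldots,x_{n-1})$. Since $x_j(m_1,\ldots,m_{n-1}) = x_j$ by the inverse relationship, Lemma 8 applied to the exchange-rate vector $m$ yields
\[
f_{x_i}(x_1,\ldots,x_{n-1}) = f_{x_i}\bigl(x_1(m_1,\ldots,m_{n-1}),\ldots,x_{n-1}(m_1,\ldots,m_{n-1})\bigr) = -m_i = -m_i(x_1,\ldots,x_{n-1}),
\]
which is exactly the claim.

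Insofar as there is any obstacle, it lies entirely in the bijectivity step: one must be confident that every point on the level surface arises as the stable point of some exchange-rate vector, not just that each exchange-rate vector yields a stable point. Once that is secured via strict convexity plus the positivity of the partial derivatives of $A$, the remainder of the argument is a one-line change of variables on top of Lemma 8.
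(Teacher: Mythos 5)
Your proof is correct and follows essentially the same route as the paper: establish that the map $(m_1,\ldots,m_{n-1})\mapsto(x_1(m),\ldots,x_{n-1}(m))$ is invertible, then substitute the inverse into the identity of Lemma \ref{Lem 8}. The only difference is minor — the paper obtains injectivity directly from Lemma \ref{Lem 8} itself (the formula $m_i=-f_{x_i}(x(m))$ recovers $m$ from $x(m)$), whereas you argue it via the Lagrange conditions and additionally verify surjectivity onto the level surface, a point the paper leaves implicit.
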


\begin{proof}
\emph{From Lemma \ref{Lem 8}, it is clear that the map 
\[
(m_{1},\ldots,m_{n-1})\mapsto(x_{1}(m_{1},\ldots,m_{n-1}),\ldots,x_{n-1}(m_{1},\ldots,m_{n-1}))
\]
is injective. Thus, $m_{i}(x_{1},\ldots,x_{n-1})$ for $1\leq i\leq n-1$
is well defined and plugging these expressions into both sides of
the equality in Lemma \ref{Lem 8} yields the result. } 
\end{proof}
The following theorem gives a precise formulation of how $A(x_{1},\ldots,x_{n-1})$
and $W(m_{1},\ldots,m_{n-1})$ are dual functions. For readers unfamiliar
with the Legendre transform, the remainder of the paper is self-contained:
all properties that we need are readily derived from the formulas
that we have established. As such, the following theorem is just a
form of mental bookkeeping. We recommend \cite{ZRM09} as a useful
resource to learn more about the Legendre transform and its applications. 
\begin{thm}
\label{Thm 10}Denote the Legendre transform of a function $g(x_{1},\ldots,x_{n-1})$
in all of its variables as $\mathcal{L}(g)$. In other words, 
\[
\mathcal{L}(g)(m_{1},\ldots,m_{n-1}):=-g(x_{1}(m_{1},\ldots,m_{n-1}),\ldots,x_{n-1}(m_{1},\ldots,m_{n-1}))+\sum_{i=1}^{n-1}m_{i}x_{i}(m_{1},\ldots,m_{n-1})
\]
where $m_{i}$ is defined as 
\[
m_{i}(x_{1},\ldots,x_{n-1}):=-g_{x_{i}}(x_{1},\ldots,x_{n-1})
\]
We then have that 
\[
\mathcal{L}(f)=W
\]
\end{thm}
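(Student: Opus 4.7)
The plan is to prove $\mathcal{L}(f)=W$ by unwinding both sides as functions of the exchange-rate vector $m=(m_1,\ldots,m_{n-1})$ and then invoking Lemma~\ref{Lem 9} to match their internal change of variables. On the AMM side, $W(m)$ is, by construction, the value function $\bar V$ evaluated at the stable state $x(m)$ that arbitrageurs pick out for the exchange rates $m$, so
\[
W(m) = f(x_1(m),\ldots,x_{n-1}(m)) + \sum_{i=1}^{n-1} m_i\, x_i(m).
\]
On the Legendre side, $\mathcal{L}(f)(m)$ is defined through its own implicit inversion: the $x_i(m)$ that appear in the definition are characterized by the relation $m_i=-f_{x_i}(x)$.

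The crux of the proof is to identify these two occurrences of $x(m)$, and this is exactly what Lemma~\ref{Lem 9} delivers. That lemma states that if $x(m)$ denotes the stable state on the level surface associated with the exchange-rate vector $m$, then $m_i=-f_{x_i}(x(m))$ for every $i$, which is precisely the defining relation used in the Legendre-transform substitution. The injectivity observation in the proof of Lemma~\ref{Lem 9} also guarantees that the inverse map $m\mapsto x(m)$ is single-valued, so there is no ambiguity in equating the $x(m)$ inside $\mathcal{L}(f)$ with the $x(m)$ inside $W$.

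With the two changes of variables aligned, the theorem reduces to substituting the common stable state into both closed-form expressions and comparing term by term; this is a short algebraic check using only the definitions of $\bar V$ and of $\mathcal{L}$ given in the statement. The main, and indeed the only, real obstacle is bookkeeping around the sign convention $m_i=-g_{x_i}$ adopted for the Legendre transform in this section: this is what makes Lemma~\ref{Lem 9} fit the Legendre formula verbatim, and once it is pinned down no further analytic input is required. The rest of the section can then proceed to exploit the Legendre-dual relationship between $f$ and $W$ without redoing any of this identification.
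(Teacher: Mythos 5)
Your overall route is the same as the paper's: the entire content of the theorem is the identification, via Lemma \ref{Lem 8} (equivalently Lemma \ref{Lem 9}), of the point $x(m)$ appearing in the Legendre-transform definition (characterized by $m_{i}=-f_{x_{i}}(x)$) with the arbitrage-driven stable state appearing in the definition of $W$; the paper's own proof is precisely the one-line remark that the result follows from Lemma \ref{Lem 8}, so the crux you isolate is the right one.

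However, the step you dismiss as ``a short algebraic check'' does not actually close, and a complete write-up has to confront this. Once the two occurrences of $x(m)$ are identified, the definitions give
\[
W(m)=f\bigl(x(m)\bigr)+\sum_{i=1}^{n-1}m_{i}x_{i}(m),
\qquad
\mathcal{L}(f)(m)=-f\bigl(x(m)\bigr)+\sum_{i=1}^{n-1}m_{i}x_{i}(m),
\]
which differ by $2f(x(m))$; the $f$ terms carry opposite signs, so the expressions do not match term by term. A concrete check: for $n=2$ and $f(x_{1})=k/x_{1}$ one has $x_{1}(m_{1})=\sqrt{k/m_{1}}$, hence $W(m_{1})=2\sqrt{km_{1}}$ while $\mathcal{L}(f)(m_{1})=-\sqrt{km_{1}}+\sqrt{km_{1}}=0$. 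The resolution is a sign convention: with $m_{i}=-g_{x_{i}}$, the transform that equals $W$ is $g(x(m))+\sum_{i}m_{i}x_{i}(m)$, equivalently $W=\mathcal{L}_{\mathrm{std}}(-f)$ in the usual convention $\mathcal{L}_{\mathrm{std}}(h)(p)=p\cdot x(p)-h(x(p))$ with $p=\nabla h(x)$. Your proposal asserts that the comparison succeeds rather than carrying it out, and so misses that the stated identity requires this sign adjustment; the identity the rest of the paper actually relies on (e.g., the starting relation in Lemma \ref{Lem 13}, $W(m)-f(x(m))=\sum_{i}m_{i}x_{i}(m)$) is the one consistent with the corrected convention.
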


\begin{proof}
\emph{The result follows from Lemma \ref{Lem 8}.} 
\end{proof}
Again, the relationship between $A$ and $V$ via the Legendre-Fenchel
transform was pointed out in \cite{AEC21}, and this observation accounts
for the heavy lifting. It can be cumbersome in some settings, however,
to think in terms of $A$ (a family of level curves or surfaces) as
opposed to the single AMM surface given by $x_{n}=f(x_{1},\ldots,x_{n-1})$.
For instance, to replicate a payoff function $W(x_{1},\ldots,x_{n-1})$,
one can extend $W$ to $V$ via 1-homogeneity, transform $V$ via
the Legendre-Fenchel transform into an $\tilde{A}$, and then manipulate
$\tilde{A}$ to obtain a more amenable AMM form $A$. Alternatively,
one can use Theorem \ref{Thm 10} to pass from $W$ to $f$ directly.
The graphs in Figure \ref{Fig 5} illustrate the relationship between
$f$ and $W$ in two dimensions.

\begin{figure}[t]
\begin{centering}
\begin{tabular}{cc}
\includegraphics[scale=0.5]{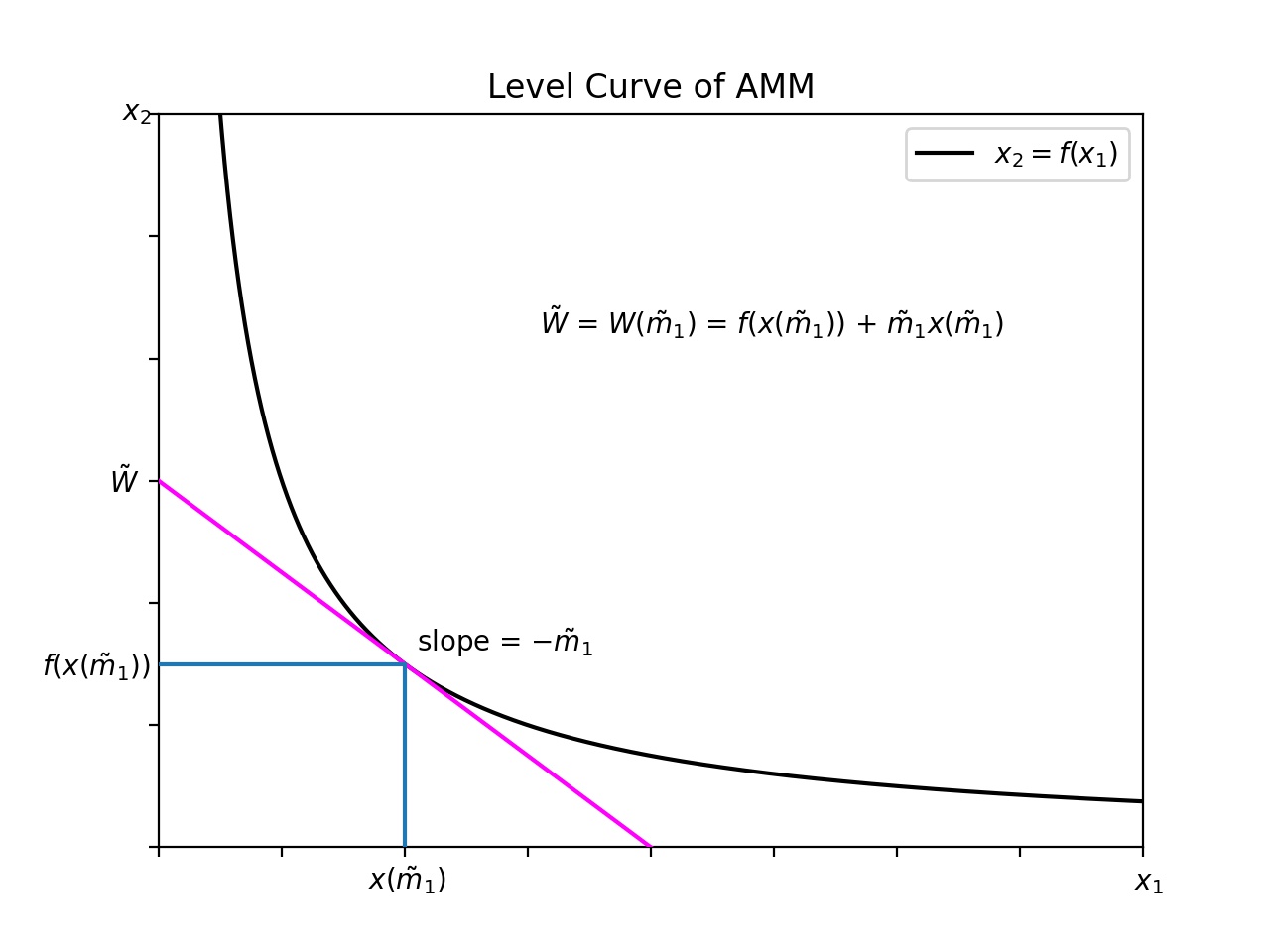}  & \includegraphics[scale=0.5]{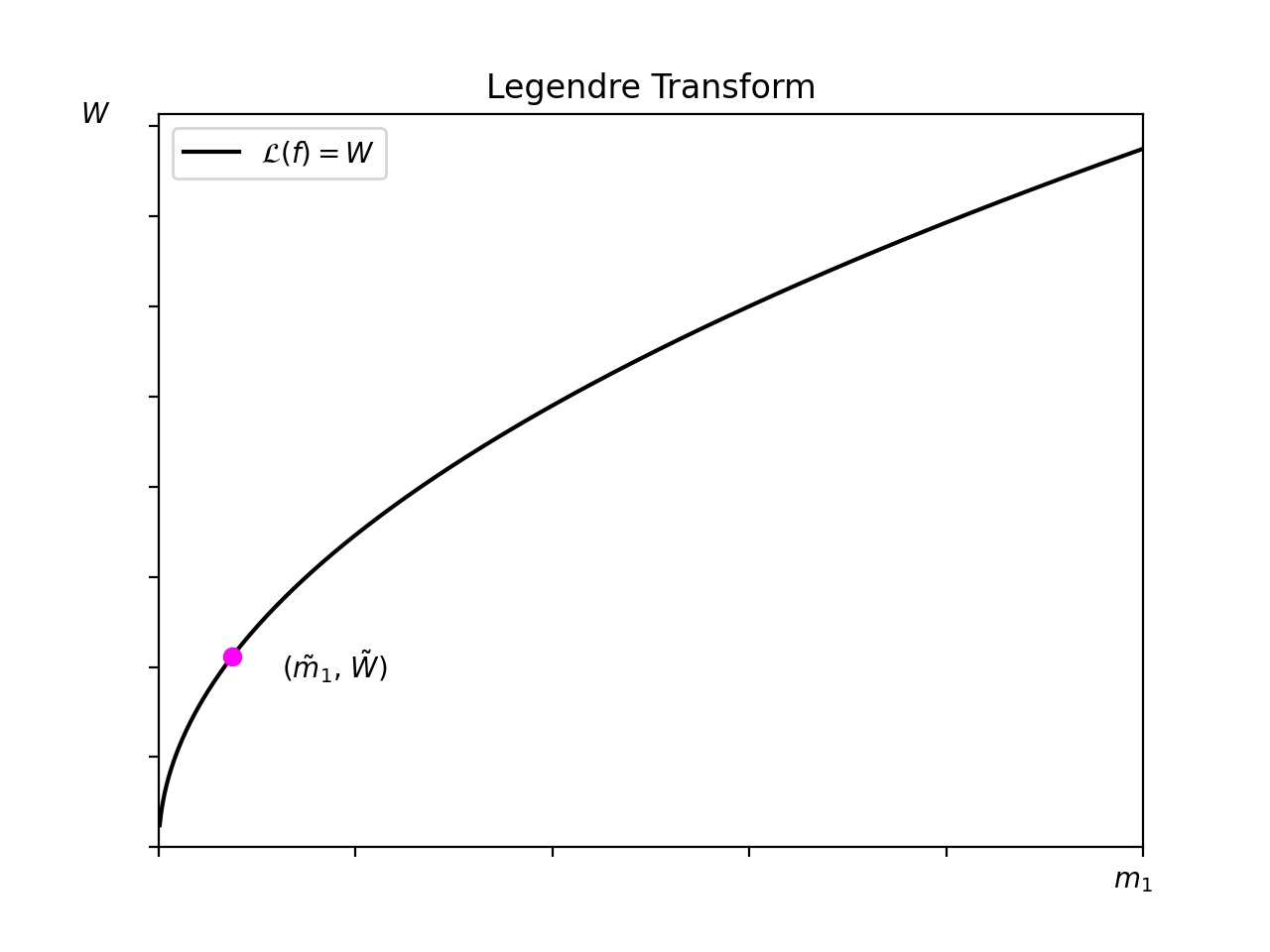}\tabularnewline
\end{tabular}
\par\end{centering}
\caption{\label{Fig 5}A level curve of an AMM $A(x_{1},x_{2})=k$ is displayed,
with $f(x_{1})$ satisfying $A(x_{1},f(x_{1}))=k$. For a given exchange
rate $m_{1}$, there is an easy way to see what the value of the arbitrageur
driven portfolio will be. This value is $\tilde{W}$ and it is measured
using $x_{2}$ tokens. The $(\tilde{m}_{1},\tilde{W})$ pair is plotted
in the bottom graph. By convexity, for every $m_{1}$, there will
be a unique $W$, and stitching these points together yields the Legendre
transform pictured in the graph on the right.}
\end{figure}

\section{Exchange Rate Level Independence}

In general, price level independence of impermanent loss allows us
to reduce the number of parameters from $2n$ to $2(n-1)$. When we
derived the impermanent loss formula for the $n$-dimensional constant
product market maker, however, we were able to reduce the number of
relevant parameters even further to $n-1$. These $n-1$ parameters
were quotients of the initial and final exchange rates (relative to
a fixed token). 
\begin{defn}
\emph{Let $A$ be an AMM and fix a liquidity surface $A=k$. Let $p^{i},p^{f}$
be price vectors and denote the corresponding stable states $x^{i},x^{f}$.
Define $m_{j}=\frac{p_{j}}{p_{n}}$ and $t_{j}=\frac{m_{j}^{f}}{m_{j}^{i}}$.
We say that the AMM is exchange rate level independent (i.e. ERLI)
if $IL$ can be rewritten as a function of the exchange rate quotients
$t_{1},...,t_{n}$ (where $t_{n}$ can be omitted because it is always
equal to $1$).} 
\end{defn}

\noindent In other words, the AMM satisfies the ERLI conditions if
its impermanent loss can be written purely as a function of exchange
rate quotients. Before proceeding, let us consider an example that
does not meet this requirement.
\begin{example}
\emph{The Curve StableSwap AMM introduced in \cite{Ego19} does not
satisfy the ERLI condition. We will demonstrate this by using their
proposed two dimensional AMM with parameter $A=1$ (not to be confused
with the $A$ that we have been using to denote our AMM). The $D=1$
level curve of this AMM (where we use the notation from the paper)
is given by 
\[
4(x_{1}+x_{2})+1=4+\frac{1}{4x_{1}x_{2}}
\]
which can be simplified to 
\[
16x_{1}^{2}x_{2}+16x_{1}x_{2}^{2}-12x_{1}x_{2}=1
\]
Note that 
\[
x_{2}=\frac{12x_{1}-16x_{1}^{2}+\sqrt{256x_{1}^{4}-384x_{1}^{3}+144x_{1}^{2}+64x_{1}}}{32x_{1}}
\]
is not homogeneous in $x_{1}$. There is a better way to see this
without solving for $x_{2}$. Assume for a contradiction that $x_{2}=cx_{1}^{\alpha}$
for some $c$. Then, 
\[
c_{1}x_{1}^{\alpha+2}+c_{2}x_{1}^{2\alpha+1}+c_{3}x^{\alpha+1}+c_{4}=0
\]
for all $x_{1}$ and some nonzero constants $c_{1},c_{2},c_{3},$
and $c_{4}$. This is impossible since the $x_{1}^{\max\{\alpha+2,2\alpha+1,0\}}$
term will grow unchecked in $x_{1}$. In this section, we will show
that the lack of homogeneity of $x_{2}=f(x_{1})$ implies that the
impermanent loss function for this liquidity curve does not satisfy
the ERLI condition.} 
\end{example}

Our first goal will be to write impermanent loss in terms of the value
function $W$ defined in the previous section. To this end, we need
the following lemma. We warn the reader of the notational simplification

\noindent 
\begin{eqnarray*}
m & = & (m_{1},\ldots,m_{n-1},1)=(m_{1},\ldots,m_{n-1})\\
x & = & (x_{1},\ldots,x_{n-1},x_{n})=(x_{1},\ldots,x_{n-1})
\end{eqnarray*}

\noindent and define $f$ and $W$ as in the previous section. 
\begin{lem}
\label{Lem 13}For $1\leq j\leq n-1$, we have

\[
x_{j}(m)=W_{m_{j}}(m)
\]
\end{lem}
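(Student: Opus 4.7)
The plan is to compute $W_{m_j}(m)$ directly by differentiating the explicit formula for $W$ and then invoking Lemma \ref{Lem 8} to produce the expected cancellation. This is the standard envelope-theorem identity that underlies the Legendre transform, so the content of Lemmas \ref{Lem 8} and \ref{Lem 9} together with Theorem \ref{Thm 10} should do essentially all of the work.

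First, I would unpack the definition of $W$ using the formula for $\bar{V}$ from the previous section. Since $W(m) = \bar{V}(x_1(m),\ldots,x_{n-1}(m))$ and $\bar{V}(x_1,\ldots,x_{n-1}) = f(x_1,\ldots,x_{n-1}) + \sum_{i=1}^{n-1} m_i x_i$, we can write
\[
W(m_1,\ldots,m_{n-1}) = f(x_1(m),\ldots,x_{n-1}(m)) + \sum_{i=1}^{n-1} m_i\, x_i(m).
\]
Here I am using the notational shorthand $x(m) = (x_1(m),\ldots,x_{n-1}(m))$ introduced just before the lemma.

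Next, I would differentiate both sides with respect to $m_j$. By the chain rule applied to the first term and the product rule applied to the second,
\[
W_{m_j}(m) = \sum_{i=1}^{n-1} f_{x_i}(x(m))\,\frac{\partial x_i}{\partial m_j}(m) + x_j(m) + \sum_{i=1}^{n-1} m_i\,\frac{\partial x_i}{\partial m_j}(m).
\]
Now I would apply Lemma \ref{Lem 8}, which states that $f_{x_i}(x(m)) = -m_i$ for each $1\leq i\leq n-1$. Substituting this into the first sum yields
\[
\sum_{i=1}^{n-1} f_{x_i}(x(m))\,\frac{\partial x_i}{\partial m_j}(m) = -\sum_{i=1}^{n-1} m_i\,\frac{\partial x_i}{\partial m_j}(m),
\]
which exactly cancels the third term, leaving $W_{m_j}(m) = x_j(m)$ as desired.

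There is no real obstacle here; the only subtlety is bookkeeping around the envelope cancellation, which is what makes the Legendre transform mechanically useful. The one thing worth double-checking is that the differentiations are justified, but this follows from the $C^2$ assumption on $A$ (and hence on $f$) together with the implicit function theorem applied to the inverse map $m \mapsto x(m)$, whose injectivity was already noted in the proof of Lemma \ref{Lem 9}.
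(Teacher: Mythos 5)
Your proof is correct and is essentially identical to the paper's: both differentiate the defining relation $W(m) = f(x(m)) + \sum_i m_i x_i(m)$ with respect to $m_j$ and invoke Lemma \ref{Lem 8} to cancel the two sums involving $\partial x_i/\partial m_j$. The only cosmetic difference is that the paper moves $f$ to the left-hand side before differentiating; your added remark on why the differentiations are justified is a reasonable bonus but not a departure in method.
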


\begin{proof}
\emph{We start with relation 
\[
W(m)-f(x_{1}(m),\ldots,x_{n-1}(m))=\sum_{i=1}^{n-1}m_{i}x_{i}(m)
\]
and differentiate both sides of the equality with respect to $m_{j}$
to obtain}

\emph{ 
\[
W_{m_{j}}(m)-\sum_{i=1}^{n-1}f_{x_{i}}(x_{1}(m),\ldots,x_{n-1}(m))[x_{i}]_{m_{j}}(m)=x_{j}(m)+\sum_{i=1}^{n-1}m_{i}[x_{i}]_{m_{j}}(m)
\]
}

\noindent \emph{which yields the result since 
\[
-f_{x_{i}}(x_{1}(m),\ldots,x_{n-1}(m))=m_{i}
\]
for $1\leq i\leq n-1$.} 
\end{proof}
We are now ready to recast impermanent loss in terms of $W$. 
\begin{thm}
\label{Thm 14}The expression for impermanent loss in terms of $W$
is 
\[
IL=\frac{W(m^{f})}{P(m^{f})}-1
\]
where $P$ is the linear approximation to $W$ at $m^{i}$. 
\end{thm}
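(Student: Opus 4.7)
The plan is to compute both pieces of the ratio explicitly using the definition of $W$ and Lemma \ref{Lem 13}, and then verify that they match the numerator and denominator of the impermanent loss formula. Throughout, I will use the convention adopted in Section 5 that $x_n$ is the measuring currency, so $p = m = (m_1,\ldots,m_{n-1},1)$.

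First I would rewrite impermanent loss by price level independence (Theorem \ref{Thm 6}) in the form
\[
IL \;=\; \frac{\sum_{j=1}^{n} m_j^f\, x_j(m^f)}{\sum_{j=1}^{n} m_j^f\, x_j(m^i)} - 1.
\]
The numerator is immediately $W(m^f)$ by the defining relation
\[
W(m) \;=\; \sum_{j=1}^{n-1} m_j\, x_j(m) \,+\, x_n(m),
\]
where $x_n(m) = f(x_1(m),\ldots,x_{n-1}(m))$ and we recall $m_n = 1$. So the content of the theorem reduces to identifying the denominator as $P(m^f)$.

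Next I would expand the linear approximation of $W$ at $m^i$ in the $n-1$ free variables $(m_1,\ldots,m_{n-1})$:
\[
P(m^f) \;=\; W(m^i) \;+\; \sum_{j=1}^{n-1} W_{m_j}(m^i)\,(m_j^f - m_j^i).
\]
Here Lemma \ref{Lem 13} is the key input: it says $W_{m_j}(m^i) = x_j(m^i)$ for $1 \leq j \leq n-1$. Substituting and then replacing $W(m^i)$ by the defining sum $\sum_{j=1}^{n-1} m_j^i x_j(m^i) + x_n(m^i)$, the $m_j^i x_j(m^i)$ terms cancel against the $-x_j(m^i) m_j^i$ terms, leaving
\[
P(m^f) \;=\; x_n(m^i) \;+\; \sum_{j=1}^{n-1} m_j^f\, x_j(m^i) \;=\; \sum_{j=1}^{n} m_j^f\, x_j(m^i),
\]
which is exactly the denominator in the impermanent loss expression. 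Assembling the two identifications completes the proof.

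There is no real obstacle beyond careful bookkeeping with the $n$-th coordinate, which is suppressed in the notation of Section 5. The only moment requiring attention is remembering that the linear approximation is taken in $n-1$ variables (since $m_n \equiv 1$), so the constant-in-$m_n$ term $x_n(m^i)$ must be carried along by hand rather than emerging from a $W_{m_n}$ derivative. Once that is kept in mind, the algebra is purely mechanical and the theorem follows directly from Lemma \ref{Lem 13}.
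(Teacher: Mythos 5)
Your proof is correct and follows essentially the same route as the paper's: both identify the numerator with $W(m^f)$ via the definition of $W$, and both show the denominator $\sum_{j=1}^{n} m_j^f x_j(m^i)$ equals the linear approximation $W(m^i)+\sum_{j=1}^{n-1}(m_j^f-m_j^i)W_{m_j}(m^i)$ using Lemma \ref{Lem 13}; you merely run that computation from $P(m^f)$ toward the denominator rather than the reverse. No gaps.
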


\begin{proof}
\emph{From Theorem \ref{Thm 6}, we have}

\noindent \emph{ 
\begin{eqnarray*}
IL & = & \frac{\sum_{j=1}^{n}m_{j}^{f}x_{j}(m^{f})}{\sum_{j=1}^{n}m_{j}^{f}x_{j}(m^{i})}-1\\
 & = & \frac{x_{n}(m^{f})+\sum_{j=1}^{n-1}m_{j}^{f}x_{j}(m^{f})}{\sum_{j=1}^{n}(m_{j}^{f}-m_{j}^{i})x_{j}(m^{i})+\sum_{j=1}^{n}m_{j}^{i}x_{j}(m^{i})}-1\\
 & = & \frac{f(x_{1}(m^{f}),\ldots,x_{n-1}(m^{f}))+\sum_{j=1}^{n-1}m_{j}^{f}x_{j}(m^{f})}{\left[\sum_{j=1}^{n-1}(m_{j}^{f}-m_{j}^{i})x_{j}(m^{i})\right]+f(x_{1}(m^{i}),\ldots,x_{n-1}(m^{i}))+\sum_{j=1}^{n-1}m_{j}^{i}x_{j}(m^{i})}-1\\
 & = & \frac{W(m^{f})}{W(m^{i})+\sum_{j=1}^{n-1}(m_{j}^{f}-m_{j}^{i})x_{j}(m^{i})}-1\\
 & = & \frac{W(m^{f})}{W(m^{i})+\sum_{j=1}^{n-1}(m_{j}^{f}-m_{j}^{i})W_{m_{j}}(m^{i})}-1
\end{eqnarray*}
}

\noindent \emph{where we have used Lemma \ref{Lem 13} to obtain the
final equality.} 
\end{proof}
We will finish the section by providing a necessary and sufficient
condition on $A$ for its impermanent loss to satisfy the ERLI condition.
To this end, we need a lemma that connects the homogeneity properties
of $f$ to the homogeneity properties of $W$. 
\begin{lem}
\label{Lem 15}The following two statements are equivalent. 
\end{lem}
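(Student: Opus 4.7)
The plan is to exploit the Legendre-type duality $\mathcal{L}(f) = W$ recorded in Theorem~\ref{Thm 10}, together with the pointwise identities $m_i(x) = -f_{x_i}(x)$ (Lemma~\ref{Lem 9}) and $x_i(m) = W_{m_i}(m)$ (Lemma~\ref{Lem 13}), to translate a homogeneity condition on $f$ into a homogeneity condition on $W$. Since the lemma statement is cut off at the displayed excerpt and the surrounding prose advertises a connection between ``the homogeneity properties of $f$'' and ``the homogeneity properties of $W$,'' I will interpret the two equivalent statements as: (a) $f$ is positively homogeneous of some degree $\alpha \neq 1$; (b) $W$ is positively homogeneous of degree $\alpha/(\alpha-1)$. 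This matches the well-known scaling behavior of the Legendre transform and is exactly what is needed to connect homogeneity of $f$ to ERLI in the subsequent theorem.

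For the forward direction, suppose $f(cx) = c^{\alpha} f(x)$ for all $c > 0$. Differentiating in each $x_i$ yields that $\nabla f$ is homogeneous of degree $\alpha - 1$, so by Lemma~\ref{Lem 9}, $m(cx) = c^{\alpha-1} m(x)$. Inverting this relation (the map $x\mapsto m(x)$ is a bijection by the strict convexity assumption in the AMM definition) gives $x(c^{\alpha-1}m) = c\, x(m)$, i.e., $x(m)$ is homogeneous of degree $1/(\alpha-1)$ in $m$. Substituting into the Legendre formula
\[
W(m) = -f(x(m)) + \sum_{i=1}^{n-1} m_i\, x_i(m)
\]
and setting $\tilde c = c^{\alpha-1}$, both terms on the right scale by $c^{\alpha}$:
\[
W(\tilde c\, m) = -c^{\alpha} f(x(m)) + c^{\alpha}\, m \cdot x(m) = c^{\alpha} W(m) = \tilde c^{\,\alpha/(\alpha-1)} W(m),
\]
which is exactly degree-$\alpha/(\alpha-1)$ homogeneity of $W$.

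For the converse, I would appeal to the involutivity of the Legendre transform on strictly convex functions (which holds here because of the convexity condition (3) in the AMM definition, combined with the implicit-function setup from Section~\ref{TheLegend}). Running the same argument with $W$ in place of $f$, with $x$ and $m$ swapping roles via $x_i = W_{m_i}$, produces $\alpha$-homogeneity of $f$ from $\beta$-homogeneity of $W$; the exponent algebra is symmetric since $\beta = \alpha/(\alpha-1)$ satisfies $\beta/(\beta-1) = \alpha$.

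The main obstacle is the degenerate case $\alpha = 1$, where a degree-$1$ homogeneous $f$ is affine along rays from the origin and the gradient map $x \mapsto -\nabla f(x)$ fails to be injective, so neither the Legendre transform nor $W$ is defined in the pointwise manner set up in Section~\ref{TheLegend}. This case must be excluded from the hypothesis; fortunately, the strict convexity of the AMM level surface rules it out automatically, so no real AMM is lost. A small auxiliary check is that the degree of homogeneity is uniquely determined (a nonzero function cannot be homogeneous of two distinct degrees), which ensures the correspondence $\alpha \leftrightarrow \alpha/(\alpha-1)$ is unambiguous.
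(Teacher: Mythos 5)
You have proved a different statement from the one in the paper. Lemma \ref{Lem 15} asserts the equivalence of \emph{coordinate-wise} homogeneity: $f$ is homogeneous of degree $\alpha_j$ in $x_j$ \emph{for each $j$ separately} if and only if $W$ is homogeneous of degree $\beta_j$ in $m_j$ for each $j$ separately, with the specific relation $\beta_j = \alpha_j/(-1+\sum_{i=1}^{n-1}\alpha_i)$. Your version --- joint homogeneity of $f$ of degree $\alpha$ versus joint homogeneity of $W$ of degree $\alpha/(\alpha-1)$ --- is only the special case obtained by summing the exponents, and it is strictly weaker. The per-coordinate version is what the subsequent ERLI theorem actually consumes: its proof factors $W(m^f) = \prod_j \left[m_j^i\right]^{\alpha_j} W(t)$, which requires scaling each $m_j$ by a \emph{different} factor $t_j$, something joint homogeneity cannot deliver.

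The gap is not merely one of bookkeeping: your forward argument does not adapt to the per-coordinate setting. If $f$ is homogeneous of degree $\alpha_j$ in $x_j$ alone, then rescaling the single coordinate $x_j\mapsto c x_j$ multiplies $m_j=-f_{x_j}$ by $c^{\alpha_j-1}$ but multiplies every \emph{other} component $m_k=-f_{x_k}$ ($k\neq j$) by $c^{\alpha_j}$, so the gradient map does not carry a one-coordinate dilation in $x$ to a one-coordinate dilation in $m$, and the clean inversion $x(c^{\alpha-1}m)=c\,x(m)$ has no single-coordinate analogue. The paper instead works directly with the variational formula $W(m)=\min_x\left[m\cdot x+f(x)\right]$: replacing $m_j$ by $cm_j$, substituting $x_j\mapsto c^{-1}x_j$ to pull out $c^{-\alpha_j}$, and then applying a \emph{global} rescaling $x\mapsto c^{-\beta_j}x$ (which uses the total degree $\sum_i\alpha_i$) to absorb the prefactor into the linear term; matching exponents is exactly what produces $\beta_j=\alpha_j/(-1+\sum_i\alpha_i)$. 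The converse uses the dual formula $f(x)=\max_m\left[W(m)-m\cdot x\right]$ symmetrically, which is the rigorous version of the ``involutivity'' you merely invoke. Your identification of the degenerate case ($\alpha=1$, i.e.\ $\sum_i\alpha_i=1$) as excluded by strict convexity does agree with the paper's footnote, but the core of the lemma --- the per-coordinate equivalence and the exponent formula --- is not established by your argument.
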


\begin{enumerate}
\item \emph{For each $1\leq j\leq n-1$, $f(x_{1},\ldots,x_{n-1})$ is homogeneous
of degree $\alpha_{j}$ in $x_{j}$.} 
\item \emph{For each $1\leq j\leq n-1$, $W(m_{1},\ldots,m_{n-1})$ is homogeneous
of degree $\beta_{j}$ in $m_{j}$.} 
\end{enumerate}
\emph{Furthermore, }

\noindent \emph{ 
\[
\beta_{j}=\frac{\alpha_{j}}{-1+\sum_{i=1}^{n-1}\alpha_{i}}
\]
Note that the case that $\sum_{i=1}^{n-1}\alpha_{i}=1$ is precluded
by the condition that the AMM is strictly convex}\footnote{\emph{To see this, consider the straight line parametric path $r(t)$
in $\mathbb{R}^{n-1}$ given by $x_{i}(t)=t$, for $1\leq i\leq n-1$.
If $\sum_{i=1}^{n-1}\alpha_{i}=1$, then $f(r(t))=ct$ for some constant
$c$. Fix $0<\lambda<1$. For any points $p=(a,\ldots,a)$ and $q=(b,\ldots,b)$
falling on $r(t)$, $\lambda f(p)+(1-\lambda)f(q)=\lambda ca+(1-\lambda)cb=c(\lambda a+(1-\lambda)b)=f(\lambda p+(1-\lambda)q)$. }}\emph{.} 
\begin{proof}
\emph{From the way that $x(m)$ was originally defined, it is clear
that $W(m)$ is the solution to the following value minimization problem
\[
W(m)=\min_{x\in\mathbb{R}^{n-1}}\left[m\cdot x+f(x)\right]
\]
Assume that the first statement holds. Then, with $\beta_{j}$ defined
as in the statement of the theorem,}

\emph{ 
\begin{eqnarray*}
W(m_{1},\ldots,cm_{j},\ldots,m_{n-1}) & = & \min_{x\in\mathbb{R}^{n-1}}\left[cm_{j}x_{j}+\sum_{i\neq j}m_{i}x_{i}+f(x_{1},\ldots,x_{j},\ldots,x_{n-1})\right]\\
 & = & \min_{x\in\mathbb{R}^{n-1}}\left[\sum_{i=1}^{n-1}m_{i}x_{i}+f(x_{1},\ldots,c^{-1}x_{j},\ldots,x_{n-1})\right]\\
 & = & \min_{x\in\mathbb{R}^{n-1}}\left[\sum_{i=1}^{n-1}m_{i}x_{i}+c^{-\alpha_{j}}f(x_{1},\ldots,x_{j},\ldots,x_{n-1})\right]\\
 & = & \min_{x\in\mathbb{R}^{n-1}}\left[\sum_{i=1}^{n-1}m_{i}x_{i}+c^{\beta_{j}}f(c^{-\beta_{j}}x_{1},\ldots,c^{-\beta_{j}}x_{j},\ldots,c^{-\beta_{j}}x_{n-1})\right]\\
 & = & \min_{x\in\mathbb{R}^{n-1}}\left[\sum_{i=1}^{n-1}m_{i}c^{\beta_{j}}x_{i}+c^{\beta_{j}}f(x_{1},\ldots,x_{j},\ldots,x_{n-1})\right]\\
 & = & c^{\beta_{j}}W(m_{1},\ldots,m_{j},\ldots,m_{n-1})
\end{eqnarray*}
}

\noindent \emph{Thus, we have shown that statement $1$ implies statement
$2$. For the other direction, we note that }

\emph{ 
\begin{equation}
f(x)=\max_{m\in\mathbb{R}^{n-1}}\left[V(m)-m\cdot x\right]\label{MaxDuality}
\end{equation}
}

\noindent \emph{To see this, we can unpack the meanings of $f(x)$
and $V(m)-m\cdot x$. Fix $x=(x_{1},\ldots,x_{n-1})$ as the quantities
for tokens $1$ through $n-1$ in the AMM. Then: $V(m)$ is the overall
value in the AMM (measured in $x_{n}$ tokens) at the stable point
corresponding to exchange rate vector $m$, and $m\cdot x$ is the
value of tokens tokens $1$ through $n-1$ (measured in $x_{n}$ tokens)
assuming exchange rate vector $m$. Thus, $V(m)-m\cdot x$ is maximized
by the exchange vector $m$ that allows for the most $x_{n}$ tokens.
This exchange rate vector $m$ corresponds to the world with the most
profitable ``hold'' strategy. There are other more technical ways
to show that above equation holds, but we omit these for expository
flow. Using equation (\ref{MaxDuality}), we can prove the other direction
of the lemma in an analogous manner to the first direction.} 
\end{proof}
We need two final lemmas relating the homogeneity of $A$ to the homogeneity
of $f$ before proceeding with the final theorem of this section. 
\begin{lem}
\label{Lem 16}If $A$ is homogeneous in each of its $n$ coordinates,
then $f$ is homogeneous in each of its $n-1$ inputs. 
\end{lem}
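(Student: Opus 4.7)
The plan is to use the defining relation $A(x_1,\ldots,x_{n-1},f(x_1,\ldots,x_{n-1}))=k$ and exploit separate homogeneity of $A$ in the $j$-th and the $n$-th coordinates simultaneously. Specifically, by hypothesis, for each coordinate $i$ there is a degree $d_i$ such that $A(x_1,\ldots,cx_i,\ldots,x_n)=c^{d_i}A(x_1,\ldots,x_i,\ldots,x_n)$ for every $c>0$. Condition 2 of the AMM definition (the positivity of $A_{x_i}$) forces $A$ to be strictly increasing in each variable, so each $d_i$ is strictly positive; in particular $d_n\neq 0$, which will matter below.

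Fix $j\in\{1,\ldots,n-1\}$ and $c>0$, and set $y:=f(x_1,\ldots,cx_j,\ldots,x_{n-1})$. By definition of $f$ and by homogeneity of $A$ in the $j$-th slot,
\[
k=A(x_1,\ldots,cx_j,\ldots,x_{n-1},y)=c^{d_j}\,A(x_1,\ldots,x_j,\ldots,x_{n-1},y),
\]
so $A(x_1,\ldots,x_j,\ldots,x_{n-1},y)=c^{-d_j}k$. On the other hand, writing $x_n:=f(x_1,\ldots,x_{n-1})$ and using homogeneity of $A$ in the $n$-th slot,
\[
A\bigl(x_1,\ldots,x_{n-1},c^{-d_j/d_n}x_n\bigr)=c^{-d_j}\,A(x_1,\ldots,x_{n-1},x_n)=c^{-d_j}k.
\]
Because $A$ is strictly increasing in $x_n$, the equation $A(x_1,\ldots,x_{n-1},\cdot)=c^{-d_j}k$ has at most one solution, and comparing the two displays forces $y=c^{-d_j/d_n}x_n$. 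Hence
\[
f(x_1,\ldots,cx_j,\ldots,x_{n-1})=c^{-d_j/d_n}\,f(x_1,\ldots,x_{n-1}),
\]
which is exactly separate homogeneity of $f$ in $x_j$ with exponent $\alpha_j=-d_j/d_n$. Since $j$ was arbitrary in $\{1,\ldots,n-1\}$, the conclusion follows.

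The only real subtlety, and thus the main point to be careful about, is the passage from ``there exists $y$ satisfying the equation'' to ``that $y$ is necessarily $c^{-d_j/d_n}x_n$'': this is where we need $d_n>0$ together with the strict monotonicity of $A$ in its last argument to guarantee uniqueness and to make the ratio $d_j/d_n$ well-defined. Everything else is a routine two-step rescaling, first absorbing the factor $c^{d_j}$ coming from the $j$-th coordinate and then undoing it by a compensating rescaling in the $n$-th coordinate. As a sanity check, for the constant product $A(x_1,x_2)=x_1x_2$ one has $d_1=d_2=1$, yielding $\alpha_1=-1$, consistent with $f(x_1)=k/x_1$.
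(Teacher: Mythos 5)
Your proof is correct, and it takes a genuinely different (more implicit) route than the paper. The paper first observes that separate homogeneity in all $n$ coordinates forces $A$ into the explicit monomial form $A(x)=A(1,\ldots,1)\,x_{1}^{\gamma_{1}}\cdots x_{n}^{\gamma_{n}}$, and then solves $A=k$ in closed form for $x_{n}$, reading off the homogeneity of $f$ directly from the resulting monomial $f(x_{1},\ldots,x_{n-1})=C\,x_{1}^{-\gamma_{1}/\gamma_{n}}\cdots x_{n-1}^{-\gamma_{n-1}/\gamma_{n}}$. You never write down a formula for $A$ or $f$: you rescale the $j$-th slot to move the level value from $k$ to $c^{-d_{j}}k$, then use homogeneity in the $n$-th slot together with strict monotonicity of $A$ in $x_{n}$ (condition 2 of the AMM definition) to identify the unique compensating rescaling of $x_{n}$, arriving at the same exponent $\alpha_{j}=-d_{j}/d_{n}$. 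The paper's version buys an explicit description of $f$ as a monomial, which is essentially what Lemma \ref{Lem 17} later inverts; your version is leaner and isolates exactly the two ingredients that make the argument work, namely $d_{n}\neq 0$ and uniqueness of the solution in the last coordinate, which the paper glosses over when it divides by $\gamma_{n}$. One small quibble: your claim that every $d_{i}$ is strictly positive follows from $x_{i}A_{x_{i}}=d_{i}A$ only when $A>0$; what you actually use is just $d_{n}\neq 0$, which does follow from $x_{n}A_{x_{n}}=d_{n}A>0$ or $<0$ but never $=0$, so the argument stands.
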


\begin{proof}
\emph{Assume that $A$ is homogeneous of degree $\gamma_{l}$ in $x_{l}$
for all $1\leq l\leq n$. Then,}

\emph{ 
\begin{eqnarray*}
k & = & A(x_{1},\ldots,x_{n})\\
 & = & A(1,\ldots,1)x_{1}^{\gamma_{1}}\ldots x_{n}^{\gamma_{n}}
\end{eqnarray*}
}

\noindent \emph{implying by uniqueness that 
\[
x_{n}=f(x_{1},\ldots,x_{n-1})=\left[\frac{k}{A(1,\ldots,1)}\right]^{-\frac{1}{\gamma_{n}}}x_{1}^{-\frac{\gamma_{1}}{\gamma_{n}}}\ldots x_{n-1}^{-\frac{\gamma_{n-1}}{\gamma_{n}}}
\]
} 
\end{proof}
\begin{lem}
\label{Lem 17}Assume for each $k$, the function $x_{n}=f(x_{1},\ldots,x_{n-1})$
induced by $A=k$ is homogeneous in each of its $n-1$ coordinates\footnote{Technically, $f$ should be subscripted by $k$, but we drop the $k$
to simplify notation.}. Then, there is a function $B(x_{1},\ldots,x_{n})$ that is homogeneous
in each of its coordinates and such that $A(x)=g(B(x))$, where $g$
is a smooth function of one variable. 
\end{lem}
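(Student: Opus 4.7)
The plan is to first derive a concrete form for each level function $f_k$, then to show that the homogeneity exponents are in fact independent of the level $k$, and finally to read off $B$ and $g$ directly. The structural observation behind the proof of Lemma \ref{Lem 16} is that a function of several variables which is separately homogeneous in each of them must be a single monomial: applying the one-variable scalings one coordinate at a time to $f_k(x_1,\ldots,x_{n-1}) = f_k(x_1\cdot 1, x_2,\ldots,x_{n-1})$ gives
\[
f_k(x_1,\ldots,x_{n-1}) \;=\; c(k)\prod_{i=1}^{n-1} x_i^{\alpha_i(k)},
\]
where $c(k) := f_k(1,\ldots,1)$. Because condition 2 in the AMM definition forces $A_{x_n}>0$, the value of $x_n$ on $\{A=k\}$ at the fixed coordinates $(1,\ldots,1)$ depends strictly monotonically on $k$, so $c(k)$ is a strictly monotone function of $k$.

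Next I would show that each $\alpha_i(k)$ is independent of $k$. Fix $i\in\{1,\ldots,n-1\}$ and restrict attention to the two-dimensional slice $S_i=\{x\in\mathbb{R}_{>0}^n : x_j=1 \text{ for } j\notin\{i,n\}\}$. On $S_i$, the level set $\{A=k\}$ is the curve $x_n=c(k)\,x_i^{\alpha_i(k)}$, which under the change of variables $(u,v)=(\ln x_i,\ln x_n)$ becomes the straight line $v=\alpha_i(k)\,u+\ln c(k)$. The level sets of $A$ partition $\mathbb{R}_{>0}^n$, so these log-lines are pairwise disjoint as $k$ varies, and strict monotonicity of $c(k)$ guarantees that distinct $k$ give distinct lines. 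Disjoint straight lines in $\mathbb{R}^2$ must be parallel, so $\alpha_i(k)\equiv\alpha_i$ is constant in $k$.

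With constant exponents in hand, the candidates for $B$ and $g$ present themselves. Define
\[
B(x_1,\ldots,x_n) \;:=\; x_n\prod_{i=1}^{n-1} x_i^{-\alpha_i},
\]
which is homogeneous of degree $-\alpha_i$ in $x_i$ for $1\le i\le n-1$ and of degree $1$ in $x_n$, hence homogeneous in each of its $n$ coordinates. By the previous step $B\equiv c(k)$ on $\{A=k\}$, so $A$ and $B$ share the same foliation of $\mathbb{R}_{>0}^n$. To produce $g$, parameterize along the diagonal ray $\gamma(t)=(t,\ldots,t)$: the map $A\circ\gamma$ is $C^2$ and strictly increasing because every $A_{x_j}>0$, while $B\circ\gamma(t)=t^{1-\sum_i\alpha_i}$ is smooth and strictly monotone because the footnote to Lemma \ref{Lem 15} already rules out $\sum_i\alpha_i=1$. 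Setting $g:=(A\circ\gamma)\circ(B\circ\gamma)^{-1}$ on the image of $B\circ\gamma$, and noting that $B$ and $B\circ\gamma$ have the same image $(0,\infty)$, gives a smooth function of one variable with $A=g\circ B$ on all of $\mathbb{R}_{>0}^n$.

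The main obstacle is the second step: proving that $\alpha_i(k)$ does not depend on $k$. The monomial form of $f_k$ is essentially formal given one-variable homogeneity, and constructing $B$ and $g$ is bookkeeping once the exponents are constant. The real work is recognizing that one can slice the problem down to a two-dimensional coordinate plane where the log transform straightens monomial level curves into lines, after which the non-crossing property of the level sets of $A$ forces a common slope across all $k$.
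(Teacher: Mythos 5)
Your proof is correct and follows essentially the same route as the paper's: extract the monomial form of each level function, argue that the exponents cannot vary with $k$ because distinct level surfaces cannot intersect, and then define $B$ as the corresponding monomial. You simply execute two steps the paper leaves terse --- the non-crossing argument (which you make precise via the two-dimensional log-coordinate slice) and the explicit construction of $g$ along the diagonal ray --- in more detail.
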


\begin{proof}
\emph{For a fixed $k$, assume that $f$ is homogeneous of degree
$\lambda_{l}$ for $1\leq l\leq n-1$. We start with 
\[
x_{n}=f(x_{1},\ldots,x_{n-1})=Cx_{1}^{\lambda_{1}}\ldots x_{n-1}^{\lambda_{n-1}}
\]
and proceed to obtain 
\[
\tilde{C}=x_{1}^{\gamma_{1}}\ldots x_{n-1}^{\gamma_{n-1}}x_{n}
\]
where $\gamma_{j}=-\lambda_{j}$ for $1\leq j\leq n-1$. Observe that
all level surfaces $A=k$ can be expressed using the same $\gamma$
exponents since, otherwise, the level surfaces would cross. In other
words, setting two surface equations like the one above with different
$\tilde{C}$'s equal to one another would lead to solutions:}

\begin{eqnarray*}
Cx_{1}^{\lambda_{1}}\ldots x_{n-1}^{\lambda_{n-1}} & = & \tilde{C}x_{1}^{\tilde{\lambda}_{1}}\ldots x_{n-1}^{\tilde{\lambda}_{n-1}}\\
x_{1}^{r_{1}}\ldots x_{n-1}^{r_{n-1}} & = & \hat{C}
\end{eqnarray*}

\noindent \emph{Define $B(x)=x_{1}^{\gamma_{1}}\ldots x_{n-1}^{\gamma_{n-1}}x_{n}$.
Then $A$ and $B$ have the same level surfaces and $g$ can be defined
accordingly. } 
\end{proof}
We are now ready to state a necessary and sufficient condition on
$A$ that ensures that its impermanent loss will satisfy the ERLI
condition. 
\begin{thm}
Let $A$ be an AMM. Then $A=g(B)$, where $g$ is a smooth function
of one real variable and $B$ is homogeneous in each coordinate if
and only if the formula for impermanent loss is ERLI for every liquidity
surface $A=k$. 
\end{thm}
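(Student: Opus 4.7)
The plan is to bolt together the chain of correspondences already established in Sections~4 and~5: homogeneity of the defining function (Lemmas~\ref{Lem 16} and~\ref{Lem 17}) corresponds to homogeneity of $f$; by Lemma~\ref{Lem 15}, this is equivalent to homogeneity of $W$ in each coordinate; and, via Theorem~\ref{Thm 14}, that homogeneity of $W$ is what governs whether impermanent loss is ERLI. The forward direction is short: if $A = g(B)$ with $B$ homogeneous in each coordinate, then $A$ and $B$ share level surfaces, so by Lemma~\ref{Lem 16} the graph function $f$ of any $A = k$ is homogeneous in each of its inputs, and Lemma~\ref{Lem 15} promotes this to $W(m) = C \prod_{j} m_j^{\beta_j}$. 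Inserting this product form into Theorem~\ref{Thm 14}, using $m_j W_{m_j}(m) = \beta_j W(m)$, yields
\[
IL = \frac{\prod_{j=1}^{n-1} t_j^{\beta_j}}{1 + \sum_{j=1}^{n-1}\beta_j(t_j - 1)} - 1,
\]
which depends only on the $t_j$'s, as required.

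The converse is the main content. Assume ERLI holds for every $A = k$; Theorem~\ref{Thm 14} then converts ERLI into the functional equation
\[
W(c_1 m_1, \ldots, c_{n-1} m_{n-1}) = G(c)\Big[W(m) + \sum_{j=1}^{n-1}(c_j - 1)\, m_j W_{m_j}(m)\Big],
\]
with $G(1, \ldots, 1) = 1$. Specializing to $c_j = 1$ for $j \neq i$ and freezing the other $m_j$'s reduces this to the one-variable equation
\[
\phi(c m_i) = G_i(c)\bigl[\phi(m_i) + (c-1)\, m_i \phi'(m_i)\bigr],
\]
where $\phi(m_i) := W(\ldots, m_i, \ldots)$ and $G_i(c) := G(1, \ldots, c, \ldots, 1)$. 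Expanding both sides in $(c-1)$ about $c = 1$ and matching second-order terms gives the Euler ODE $m_i^2 \phi''(m_i) = G_i''(1)\, \phi(m_i)$, whose positive solutions are of the form $A m_i^{r_1} + B m_i^{r_2}$ with $r_1 + r_2 = 1$. Substituting a two-term ansatz back into the functional equation forces $c^{r_1}/[1 + r_1(c-1)] = c^{r_2}/[1 + r_2(c-1)]$ for all $c > 0$, which fails for $r_1 \neq r_2$ unless $\{r_1, r_2\} = \{0, 1\}$; the latter case makes $\phi$ affine and is killed by strict concavity of $W$ (itself inherited from the strict convexity of $f$ built into the AMM axioms). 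The degenerate $\sqrt{m_i}\log m_i$ and oscillatory branches are likewise ruled out by positivity of $W$. Therefore $\phi(m_i) = C m_i^{r_i}$; i.e., $W$ is homogeneous in each $m_i$ separately.

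Running this argument coordinate by coordinate gives $W(m) = C \prod_j m_j^{r_j}$. The reverse implication of Lemma~\ref{Lem 15} transports this to homogeneity of $f$ in each $x_j$ on every $A = k$ level set, and Lemma~\ref{Lem 17} assembles the family of homogeneous graph functions into the factorization $A = g(B)$ with $B$ homogeneous in each coordinate, the exponents being $k$-independent by the non-crossing argument used inside Lemma~\ref{Lem 17}. The hardest step is the rigidity move in the converse: neither the Euler ODE by itself nor the symmetry of the functional equation pins $\phi$ down uniquely, so the decisive calculation is the back-substitution that collapses any two-term solution to a pure power, and checking that strict convexity and positivity eliminate the remaining edge cases.
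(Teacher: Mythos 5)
Your proof is correct, and while the skeleton is the same as the paper's (forward direction via Lemma \ref{Lem 16} $\to$ Lemma \ref{Lem 15} $\to$ Theorem \ref{Thm 14}, with your closed form $IL=\prod_j t_j^{\beta_j}/[1+\sum_j\beta_j(t_j-1)]-1$ matching the paper's computation; converse reduced to homogeneity of $W$ and then assembled via Lemmas \ref{Lem 15} and \ref{Lem 17}), the central rigidity step of the converse is genuinely different. The paper fixes the dilation factor $c$, compares $\phi(s)$ with $\phi(cs)$ along the price variable, derives the first-order ODE $C'(s)=C(s)/(s-1)$ for $C=MA-B$, concludes $W(cm_l,\cdot)=g(c)W(m_l,\cdot)$, and then invokes Cauchy's multiplicative functional equation $g(ab)=g(a)g(b)$, followed by a separate argument ($\gamma_a=\gamma_b$) that the resulting exponent does not depend on the frozen coordinates. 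You instead Taylor-expand the ERLI functional equation in $c$ about $c=1$, extract the Euler ODE $m_i^2\phi''=G_i''(1)\phi$, and kill the two-parameter solution family by back-substitution. Your route buys three things: it bypasses Cauchy's equation entirely; it gets uniformity of the exponent across frozen coordinates essentially for free, since $G_i''(1)$ is a single constant by ERLI and the identity $G_i(c)=c^{r}/(1+r(c-1))$ then pins down which root is realized (you should state this explicitly, as it replaces the paper's $\gamma_a=\gamma_b$ paragraph); and it avoids the paper's evaluation of $W$ at $m_l=0$, which sits on the boundary of the price domain. The cost is that you need $W$ and $G_i$ to be twice differentiable at the relevant points, a step up from the $C^1$ data the paper's first-order ODE uses --- though since the paper already differentiates $x(m)=W_{m}(m)$ in proving Lemma \ref{Lem 13}, this is not a real escalation of hypotheses. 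Your handling of the degenerate branches (affine case excluded by strict concavity, repeated/complex roots by positivity of $W$ as $\log m_i\to-\infty$) is sound and at the same level of rigor as the paper's own appeal to strict concavity.
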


\begin{proof}
\emph{We start with the forward direction. Fix $k$. Without loss
of generality, we can assume that $A$ is homogeneous in each coordinate
since $A=k$ corresponds to some $B=\tilde{k}$. By Lemma \ref{Lem 16},
$f$ is homogeneous in each of its $n-1$ coordinates, and so then
by Lemma \ref{Lem 15}, $W$ is homogeneous in each of its $n-1$
coordinates. Say that $W$ is homogeneous of degree $\alpha_{j}$
in $m_{j}$. Defining $t_{j}=\frac{m_{j}^{f}}{m_{j}^{i}}$ for $1\leq j\leq n-1$
and invoking Theorem \ref{Thm 14}, we have 
\begin{eqnarray*}
IL & = & \frac{W(m^{f})}{W(m^{i})+\sum_{j=1}^{n-1}(m_{j}^{f}-m_{j}^{i})W_{m_{j}}(m^{i})}-1\\
 & = & \frac{\prod_{j=1}^{n-1}\left[m_{j}^{i}\right]^{\alpha_{j}}}{\prod_{j=1}^{n-1}\left[m_{j}^{i}\right]^{\alpha_{j}}}\frac{W(t)}{W(1,\ldots,1)+\sum_{j=1}^{n-1}(t_{j}-1)W_{m_{j}}(1,\ldots,1)}-1\\
 & = & \frac{W(t)}{W(1,\ldots,1)+\sum_{j=1}^{n-1}(t_{j}-1)W_{m_{j}}(1,\ldots,1)}-1
\end{eqnarray*}
}

\noindent \emph{where in the second equality we have used the fact
that $W_{m_{j}}$ is homogeneous of degree $\alpha_{k}$ in $m_{k}$
if $k\neq j$ and is homogeneous of degree $\alpha_{j}-1$ in $m_{j}$.
Thus, noting that IL has been expressed entirely in terms of the exchange
rate ratios $t_{j}$, the forward direction is complete.}

\emph{Fix $k$ and consider the surface $A=k$ with induced function
$x_{n}=f(x_{1},\ldots,x_{n-1})$. For the reverse direction, by Lemma
\ref{Lem 17}, it suffices to show that $f$ is homogeneous in each
of its $n-1$ coordinates. By Lemma \ref{Lem 15}, it then suffices
to show that $W$ is homogeneous in each of its $n-1$ coordinates.
Choose a coordinate $m_{l}$ and fix $c\neq0$. Fix all other coordinates
aside from $m_{l}$ in defining the following functions:}

\noindent \emph{ 
\begin{eqnarray*}
A(m_{l}) & = & W(m_{1},\ldots,m_{l},\ldots,m_{n-1})\\
B(m_{l}) & = & W(m_{1},\ldots,cm_{l},\ldots,m_{n-1})
\end{eqnarray*}
}

\noindent \emph{The fact that impermanent loss is ERLI for the surface
$A=k$ implies that}

\noindent \emph{ 
\begin{eqnarray*}
\frac{A(m_{l}^{f})}{A(m_{l}^{i})+(m_{l}^{f}-m_{l}^{i})A'(m_{l}^{i})} & = & \frac{B(m_{l}^{f})}{B(m_{l}^{i})+(cm_{l}^{f}-cm_{l}^{i})A'(cm_{l}^{i})}\\
 & = & \frac{B(m_{l}^{f})}{B(m_{l}^{i})+(m_{l}^{f}-m_{l}^{i})B'(m_{l}^{i})}
\end{eqnarray*}
}

\noindent \emph{for all $m_{l}^{i},m_{l}^{f}\in\mathbb{R}$. Thus,
substituting $m_{l}^{f}=1$ and $m_{l}^{i}=s$ yields }

\emph{ 
\[
\frac{A(1)}{A(s)+(1-s)A'(s)}=\frac{B(1)}{B(s)+(1-s)B'(s)}
\]
which simplifies to 
\[
(s-1)\left[MA'(s)-B'(s)\right]=\left[MA(s)-B(s)\right]
\]
with $M=\frac{B(1)}{A(1)}$. Defining $C(s)=MA(s)-B(s)$, we obtain
the ODE 
\[
C'(s)=\frac{C(s)}{s-1}
\]
This is a simple separable ODE with solutions of the form 
\[
C(s)=E(s-1)
\]
for constant $E$}\footnote{\emph{$E$ might depend on $m_{j}$ for $j\neq l$.}}\emph{.
Note that even though the ODE has a singularity at $s=1$, from its
direction field, it is clear that the only differentiable functions
that satisfy it are of this form. Using the fact that the equation
$C(0)=MA(0)-B(0)=0$ holds, we conclude that $E=0$ and hence that}

\emph{ 
\begin{equation}
A(s)=MB(s)=MA(cs)
\end{equation}
}

\noindent \emph{The homogeneity of $W$ in $m_{l}$ will follow from
this equation, but there is a bit more work to do to see this. Since
$M$ depends on $c$, we have established that 
\[
A(cs)=g(c)A(s)
\]
for some function $g$. It turns out that this is enough to show that
$A(s)$ is homogeneous since $g$ satisfies the property $g(ab)=g(a)g(b)$:}

\noindent \emph{ 
\begin{eqnarray*}
A(abs) & = & g(ab)A(s)\\
 & = & g(a)A(bs)=g(a)g(b)A(s)
\end{eqnarray*}
}

\noindent \emph{Thus, $g$ is a solution to Cauchy's multiplicative
functional equation and this implies that $g(c)=c^{\gamma}$ for some
$\gamma$. For more background on such equations, consult \cite{Kuc09}.}

\emph{Note that we are still not quite done. We have shown that $A(cs)=c^{\gamma}A(s)$,
but our $\gamma$ may in theory depend on the $m_{j}$ for $j\neq l$.
To see that this is not the case, fix two vectors $(m_{1}^{a},\ldots,m_{l-1}^{a},m_{l+1}^{a},...,m_{n-1}^{a})$
and $(m_{1}^{b},\ldots,m_{l-1}^{b},m_{l+1}^{b},...,m_{n-1}^{b})$
in $\mathbb{R}^{n-2}$. We know that}

\noindent \emph{ 
\begin{eqnarray*}
W(m_{1}^{a},\ldots,cm_{l},\ldots,m_{n-1}^{a}) & = & c^{\gamma_{a}}W(m_{1}^{a},\ldots,m_{l},\ldots,m_{n-1}^{a})\\
W(m_{1}^{b},\ldots,cm_{l},\ldots,m_{n-1}^{b}) & = & c^{\gamma_{b}}W(m_{1}^{b},\ldots,m_{l},\ldots,m_{n-1}^{b})
\end{eqnarray*}
}

\noindent \emph{The fact that impermanent loss is ERLI for the surface
$A=k$ implies that}

\noindent \emph{ 
\[
\begin{array}{c}
\frac{c^{\gamma_{a}}W(m_{1}^{a},\ldots,1,\ldots,m_{n-1}^{a})}{W(m_{1}^{a},\ldots,1,\ldots,m_{n-1}^{a})+(c-1)W_{m_{l}}(m_{1}^{a},\ldots,1,\ldots,m_{n-1}^{a})}\\
=\frac{W(m_{1}^{a},\ldots,c,\ldots,m_{n-1}^{a})}{W(m_{1}^{a},\ldots,1,\ldots,m_{n-1}^{a})+(c-1)W_{m_{l}}(m_{1}^{a},\ldots,1,\ldots,m_{n-1}^{a})}\\
=\frac{W(m_{1}^{b},\ldots,c,\ldots,m_{n-1}^{b})}{W(m_{1}^{b},\ldots,1,\ldots,m_{n-1}^{b})+(c-1)W_{m_{l}}(m_{1}^{b},\ldots,1,\ldots,m_{n-1}^{b})}\\
=\frac{c^{\gamma_{b}}W(m_{1}^{b},\ldots,1,\ldots,m_{n-1}^{b})}{W(m_{1}^{b},\ldots,1,\ldots,m_{n-1}^{b})+(c-1)W_{m_{l}}(m_{1}^{b},\ldots,1,\ldots,m_{n-1}^{b})}
\end{array}
\]
}

\noindent \emph{Absorbing terms depending on $m_{j}^{a}$ and $m_{j}^{b}$
into constants yields}

\noindent \emph{ 
\[
\frac{c^{\gamma_{a}}K_{1}^{a}}{K_{2}^{a}+cK_{3}^{a}}=\frac{c^{\gamma_{b}}K_{1}^{b}}{K_{2}^{b}+cK_{3}^{b}}
\]
and once more simplified, an equation of the form}

\emph{ 
\[
J_{1}c^{\gamma_{a}+1}+J_{2}c^{\gamma_{a}}+J_{3}c^{\gamma_{b}+1}+J_{4}c^{\gamma_{b}}=0,
\]
where the $J$s and $K$s are constants that depend on the $m_{j}^{a}$s
and $m_{j}^{b}$s. By the strict concavity of $W$ (established by
the relation in Lemma \ref{Lem 15}), it is clear that $J_{1}$, $J_{2}$,
$J_{3}$, and $J_{4}$ are nonzero. As such, for the above equation
to hold for all $c$, $J_{1}=-J_{3}$, $J_{2}=-J_{4}$, and 
\[
\gamma_{a}=\gamma_{b}
\]
} 
\end{proof}
The following corollaries are useful when it is not immediately obvious
whether $A$ is of the form $g(B)$ with $B$ homogeneous in each
of its coordinates. 
\begin{cor}
Let $A$ be an AMM. Let $A=k$ be a liquidity surface and let $x_{n}=f(x_{1},\ldots,x_{n-1})$
be the function that it induces. Then $f$ is homogeneous in each
of its coordinates if and only if the impermanent loss for $A=k$
is ERLI. 
\end{cor}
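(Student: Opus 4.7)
The plan is to recognize this corollary as the ``single surface'' version of the preceding theorem. The global hypothesis $A = g(B)$ with $B$ homogeneous was used in that theorem only to transport homogeneity of the induced $f$ uniformly across every level surface, via Lemma \ref{Lem 16} on the way in and Lemma \ref{Lem 17} on the way out. Since the corollary fixes a single surface $A = k$, I can drop both of these lemmas and work directly with the induced $f$ on that surface, reducing both directions (by Lemma \ref{Lem 15}) to statements about the homogeneity of the corresponding $W$.

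For the forward direction, suppose $f$ is homogeneous of degree $\alpha_j$ in $x_j$ for each $1 \leq j \leq n-1$. By Lemma \ref{Lem 15}, $W(m_1,\ldots,m_{n-1})$ is homogeneous of degree $\beta_j = \alpha_j / \bigl(-1 + \sum_i \alpha_i\bigr)$ in $m_j$. I would then plug $m_j^f = t_j m_j^i$ into the formula from Theorem \ref{Thm 14}, factor $\prod_{j=1}^{n-1} (m_j^i)^{\beta_j}$ from the numerator and from each term of the denominator (using that $W_{m_j}$ is homogeneous of degree $\beta_j - 1$ in $m_j$ and of degree $\beta_k$ in $m_k$ for $k \ne j$), and observe that the ratio collapses to an expression in $t_1,\ldots,t_{n-1}$ alone. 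This is the identical calculation already performed in the forward direction of the preceding theorem.

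For the reverse direction, assume IL is ERLI on $A = k$. By Lemma \ref{Lem 15} it suffices to show $W$ is homogeneous in each $m_l$; homogeneity of $f$ then follows, and crucially I never have to appeal to Lemma \ref{Lem 17} (which was needed only to reconstruct a global $B$ and $g$). To get homogeneity of $W$ in $m_l$, I would run the ODE argument from the theorem verbatim: freezing the other coordinates, define the one-variable functions $A(m_l), B(m_l)$ as in the theorem; ERLI on the surface $A = k$ produces the separable ODE $C'(s) = C(s)/(s-1)$ for $C = \tfrac{B(1)}{A(1)} A - B$, whose only smooth solutions are $C \equiv 0$; hence $W(\ldots,cm_l,\ldots) = g(c)\,W(\ldots,m_l,\ldots)$. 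Cauchy's multiplicative equation forces $g(c) = c^\gamma$, and the independence of $\gamma$ from the other frozen coordinates follows from the same two-point comparison and strict concavity argument already given.

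The main obstacle is essentially nonexistent: all of the technical work has been done in the proof of the preceding theorem. The only substantive thing to check is that the preceding proof was genuinely \emph{local} to a single level surface, and it was---Lemmas \ref{Lem 16} and \ref{Lem 17} are the only steps that mix information across surfaces, and the reduction above shows that neither is needed once the hypothesis and conclusion have been re-expressed directly in terms of $f$ on $A = k$.
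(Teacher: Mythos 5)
Your proposal is correct and matches the paper's intent: the corollary is stated without a separate proof precisely because the proof of the preceding theorem already establishes, surface by surface, the equivalence between homogeneity of $f$ and the ERLI property via Lemma \ref{Lem 15} and Theorem \ref{Thm 14}, with Lemmas \ref{Lem 16} and \ref{Lem 17} serving only to translate between the global hypothesis on $A$ and the per-surface statement about $f$. Your observation that both directions of that proof are local to a single level surface once the hypothesis is phrased in terms of $f$ is exactly the right reduction.
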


\medskip{}

\begin{cor}
Geometric mean market makers and compositions of these market makers
with smooth real valued functions compose the space of ERLI market
makers. 
\end{cor}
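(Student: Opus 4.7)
The plan is to derive this corollary directly from the preceding theorem, which says that $A$ is ERLI on every level surface if and only if $A = g(B)$ for some smooth real-valued $g$ and some $B$ that is homogeneous in each coordinate separately. So the task reduces to showing that a function $B:\mathbb{R}_{>0}^n \to \mathbb{R}$ that is separately homogeneous in each of its $n$ coordinates is, up to a multiplicative constant, a geometric mean market maker $x_1^{\gamma_1}\cdots x_n^{\gamma_n}$ — and that once we know this, the constant and any further reparametrization can be absorbed into the outer function $g$.

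For the forward direction, I would fix the exponents $\gamma_i$ such that $B(\ldots, cx_i, \ldots) = c^{\gamma_i} B(\ldots, x_i, \ldots)$ and then evaluate along coordinate axes one variable at a time. Starting from $B(x_1, 1, \ldots, 1) = x_1^{\gamma_1} B(1,\ldots,1)$ and applying separate homogeneity in $x_2$ to rescale the second coordinate gives $B(x_1, x_2, 1, \ldots, 1) = x_1^{\gamma_1} x_2^{\gamma_2} B(1,\ldots,1)$. Iterating this across all $n$ coordinates yields
\[
B(x_1, \ldots, x_n) = B(1,\ldots,1)\, x_1^{\gamma_1}\cdots x_n^{\gamma_n}.
\]
Thus $B$ is a scalar multiple of a geometric mean market maker, and $A = g(B)$ can be rewritten as $A = \tilde g(x_1^{\gamma_1}\cdots x_n^{\gamma_n})$ by absorbing the constant into $\tilde g$.

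For the converse, any geometric mean market maker $B(x) = x_1^{\gamma_1}\cdots x_n^{\gamma_n}$ is trivially homogeneous in each coordinate (of degree $\gamma_i$ in $x_i$), so $A = g(B)$ inherits the hypothesis of the preceding theorem and is therefore ERLI on every level surface.

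The main obstacle, if any, is mostly bookkeeping rather than mathematical depth: one has to be careful that the exponents $\gamma_i$ are truly constant (do not depend on the fixed values of the other coordinates), which follows from separate homogeneity applied pointwise, and that the exponent vector $(\gamma_1,\ldots,\gamma_n)$ is compatible with the strict-convexity requirement imposed on AMMs (as alluded to in the footnote following Lemma \ref{Lem 15}). Once these points are noted, the corollary is essentially a direct readout of the previous theorem together with the elementary fact that separately homogeneous functions are monomials.
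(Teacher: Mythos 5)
Your proposal is correct and matches the intended derivation: the paper states this corollary without proof as an immediate consequence of the preceding theorem, and the only substantive step you supply --- that a function separately homogeneous in each coordinate must be a constant multiple of a monomial $x_1^{\gamma_1}\cdots x_n^{\gamma_n}$, established by rescaling one coordinate at a time from $(1,\ldots,1)$ --- is exactly the identification the paper itself uses implicitly in the proof of Lemma \ref{Lem 16}. Your closing caveats (constancy of the exponents and compatibility with strict convexity) are appropriate but do not change the argument.
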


\section{Conclusion}

In this paper, we have established a framework for reducing the parameters
involved in describing impermanent loss for automated market makers.
In doing so, we have shown that geometric mean market makers are the
simplest class of market makers from an impermanent loss standpoint.
More concretely, G3Ms exhibit a condition that we call \emph{Exchange
Rate Level Independence} (ERLI). ERLI is an interesting property in
that it is connected to how the dynamics of AMMs are affected when
token quantities are rescaled. Such rescalings can be done for either
conceptual or computational purposes. An example of this sort of rescaling
transformation can be found in the Curve V2 whitepaper \cite{Ego21}
and we believe that this mechanism also shifts Curve V2's impermanent
loss profile to a new profile that is closer to having ERLI. We will
work to quantify this in future work, and more generally, we are interested
in exploring how to extend our framework to dynamic AMMs.

\section{Acknowledgements}

The authors would like to thank Jamie Irvine for several interesting
conversations and for his insight regarding a critical lemma in this
paper. We would also like to thank Austin Pollok for sharing his insights
regarding parallels between DeFi and TradFi. Finally, we would like
to thank Guillermo Angeris for sharing his ideas and useful feedback
with us.

\newpage{}


\begin{thebibliography}{Ada18}
\bibitem[AC20]{AC20}Guillermo Angeris and Tarun Chitra. Improved
price oracles: Constant function market makers. In Proceedings of
the 2nd ACM Conference on Advances in Financial Technologies, pages
80--91, 2020.

\bibitem[AD21]{AD21}Andreas A Aigner and Gurvinder Dhaliwal. Uniswap:
Impermanent loss and risk profile of a liquidity provider.arXiv preprint
arXiv:2106.14404, 2021.

\bibitem[Ada18]{Ada18}Hayden Adams. Uniswap whitepaper. URL: \url{https://hackmd. io/C-DvwDSfSxuh-Gd4WKEig},
2018.

\bibitem[AEC20]{AEC20}Guillermo Angeris, Alex Evans, and Tarun Chitra.
When does the tail wag the dog? Curvature and market making. arXiv
preprint arXiv:2012.08040, 2020.

\bibitem[AEC21]{AEC21}Guillermo Angeris, Alex Evans, and Tarun Chitra.
Replicating market makers, 2021.

\bibitem[Aoy20]{Aoy20}Jun Aoyagi. Liquidity provision by automated
market makers. Available at SSRN 3674178,2020.

\bibitem[Bou21]{Bou21}Nassib Boueri. G3m impermanent loss dynamics.
arXiv preprint arXiv:2108.06593, 2021.

\bibitem[CJ21]{CJ21}Agostino Capponi and Ruizhe Jia. The adoption
of blockchain-based decentralized exchanges, 2021.

\bibitem[Ego19]{Ego19}M. Egorov. Stableswap-efficient mechanism for
stablecoin liquidity. Tech. Rep., 2019.

\bibitem[Ego21]{Ego21}M. Egorov. Automatic market-making with dynamic
peg., 2021.

\bibitem[EH21]{EH21}Daniel Engel and Maurice Herlihy. Composing networks
of automated market makers. CoRR, abs/2106.00083, 2021.

\bibitem[Eva20]{Eva20}Alex Evans. Liquidity provider returns in geometric
mean markets. arXiv preprint arXiv:2006.08806, 2020.

\bibitem[Kuc09]{Kuc09}Marek Kuczma. An introduction to the theory
of functional equations and inequalities. 01 2009.

\bibitem[MM19]{MM19}Fernando Martinelli and Nikolai Mushegian. Balancer
whitepaper. URL: \url{https://balancer.fi/whitepaper.pdf}, 2019.

\bibitem[ZRM09]{ZRM09}R. K. P. Zia, Edward F. Redish, and Susan R.
McKay. Making sense of the Legendre transform. American Journal of
Physics, 77(7):614--622, Jul 2009. 
\end{thebibliography}
\end{document}